\documentclass[a4paper, 11pt]{article}

\usepackage{amsthm}
\usepackage{amsmath}
\usepackage{amssymb}
\usepackage{graphicx}
\usepackage[round]{natbib}
\usepackage{bbm}
\usepackage{textcomp}
\usepackage{booktabs}
\usepackage{url}
\usepackage[margin=2.5cm]{geometry}
\usepackage[colorlinks,citecolor=blue]{hyperref}

\theoremstyle{plain}  
\newtheorem{thm}{Theorem}[section] 
 
\newtheorem{prop}[thm]{Proposition}

\theoremstyle{definition} 
\newtheorem{defn}{Definition}[section] 
 
\newtheorem{exmp}{Example}[section]

\theoremstyle{remark} 
\newtheorem*{rem}{Remark}

\newcommand{\diff}{\,\mathrm{d}}
\newcommand{\E}{\mathbb{E}}

\newcommand{\R}{\mathbb{R}}
\newcommand{\N}{\mathbb{N}}

\newcommand{\F}{\mathcal{F}}

\newcommand{\one}{\mathbbm{1}}

\newcommand{\leST}{\le_{\mathrm{st}}}
\newcommand{\geST}{\ge_{\mathrm{st}}}

\DeclareMathOperator*{\argmax}{arg\,max}

\newcommand{\beginsupplement}{%
	\setcounter{table}{0}
	\renewcommand{\thetable}{S\arabic{table}}%
	\setcounter{figure}{0}
	\renewcommand{\thefigure}{S\arabic{figure}}%
}

\begin{document}

\title{Sequentially valid tests for forecast calibration}
\author{Sebastian Arnold \and Alexander Henzi \and Johanna F.~Ziegel \and \\ \texttt{\{sebastian.arnold,alexander.henzi,johanna.ziegel\}@stat.unibe.ch}}
\date{November 8, 2021}
\maketitle

\begin{abstract}
Forecasting and forecast evaluation are inherently sequential tasks. Predictions are often issued on a regular basis, such as every hour, day, or month, and their quality is monitored continuously. However, the classical statistical tools for forecast evaluation are static, in the sense that statistical tests for forecast calibration are only valid if the evaluation period is fixed in advance. Recently, e-values have been introduced as a new, dynamic method for assessing statistical significance. An e-value is a non-negative random variable with expected value at most one under a null hypothesis. Large e-values give evidence against the null hypothesis, and the multiplicative inverse of an e-value is a conservative p-value. E-values are particularly suitable for sequential forecast evaluation, since they naturally lead to statistical tests which are valid under optional stopping. This article proposes e-values for testing probabilistic calibration of forecasts, which is one of the most important notions of calibration. The proposed methods are also more generally applicable for sequential goodness-of-fit testing. We demonstrate in a simulation study that the e-values are competitive in terms of power when compared to extant methods, which do not allow for sequential testing. In this context, we introduce test power heat matrices, a graphical tool to compactly visualize results of simulation studies on test power. In a case study, we show that the e-values provide important and new useful insights in the evaluation of probabilistic weather forecasts.
\end{abstract}

\section{Introduction}
Probabilistic forecasts incorporate the uncertainty about a future quantity $Y$ comprehensively in the form of probability distributions. A minimal requirement for useful probabilistic forecasts is calibration, meaning that the predicted probabilities should conform with the actual observed event frequencies. This article develops novel statistical tools to validate \emph{probabilistic calibration}, one of the most prominent and widely applied notions of calibration. Probabilistic calibration requires that $Y$ should be below the $\alpha$-quantile of the forecast distribution with a frequency of about $\alpha \cdot 100\%$, for all $\alpha \in (0,1)$. More precisely, the predictive cumulative distribution function (CDF) $F$ is evaluated at the outcome $Y$, and this quantity, suitably randomized in case of discontinuities of $F$, is called the {probability integral transform (PIT)} and should be uniformly distributed on  $(0,1)$ for a probabilistically calibrated forecast. Checks of the uniformity of the PIT, and of the closely related rank histogram, constitute a cornerstone of forecast evaluation \citep{Dawid1984,DieboldGuntherETAL1998,Hamill2001,Gneiting2007}.

From a statistical point of view, testing probabilistic calibration for forecasts with lag $1$ is straightforward. For example, in the case of daily forecasts issued for the next day, it suffices to apply any goodness of fit test for the standard uniform distribution to a sample of the PIT from a series of forecasts and observations. However, it has been noted early on that statistical tests alone are not informative enough, because they do not indicate the type of misspecification \citep{DieboldGuntherETAL1998}. Therefore, tests of calibration are commonly accompanied by a histogram plot of the PIT distribution, which allows to identify classical types of misspecification at a glance, namely, biased forecasts lead to PIT histograms skewed to the left or the right, and under- or overdispersed forecasts yield U-shaped or inverse U-shaped PIT histograms, respectively. 

We argue that a drawback of the established tools for validating probabilistic calibration is that they do not fully account for the sequential nature of forecasting. The relationship between forecasts and observations is often complicated and forecast misspecification changes over time. However, the classical tools for validating calibration require the sample size to be fixed in advance and independently of the data. As an illustrative example, consider a weather forecaster who, after updating a prediction model, monitors the quality of daily forecasts and wants to check if the new forecasts are probabilistically calibrated. She aims at a sample size of one year, and plans to check uniformity of the PIT at the end of the observation period. If, by chance, the forecaster realizes after half of observation period that the forecast is strongly biased, then a p-value from a classical goodness of fit test with all data at this time point is not valid since the sample size depends on the data. On the other hand, if the forecaster does not look at the data until the end of the observation period, then the PIT distribution with the sample of the full year could again be close to uniform, for example if there is a change in the direction of the bias in the second half of the year, and the forecaster is unable to detect the misspecification. Such effects appear in practice as exemplified in the case study in Section \ref{sec:case_study}. Any analysis of sub-periods has to be planned in advance, which is often difficult and cumbersome since it is usually not known in advance how forecast misspecification changes over time and what discretizations of the time domain are appropriate.

This article develops new methodology for checks of probabilistic and related notions of calibration in a sequential setting, based on the new concept of \emph{e-values}. E-values have received an increasing interest in recent years, see \citet{Vovk2021}, \citet{GrunwaldHeideETAL2019}, \citet{Shafer2021} (who uses the term \emph{betting score}), \citet{Ramdas2020}, and the references therein. An e-value is a non-negative random variable $E$ such that for all distributions $P$ in a set $\mathcal{P}$, the null hypothesis, the inequality $\mathbb{E}_{P}(E) \leq 1$ holds. E-values can be easily transformed into (conservative) p-values since $P(1/E \leq \alpha) \leq \alpha$ for $\alpha \in (0,1)$ by Markov's inequality, and large e-values give evidence against the null hypothesis. One main motivation for using e-values is their simple behaviour under combinations. Convex combinations of e-values are again e-values, and so is the product of independent e-values. In a sequential setting, if $(E_t)_{t \in \mathbb{N}}$ is a sequence of e-values adapted to a filtration $(\mathcal{F}_t)_{t \in \mathbb{N}}$, then by the tower property of conditional expectations the process $e_t = \prod_{i = 1}^t E_i$, $t \in \mathbb{N}$, is a non-negative supermartingale or \emph{test martingale} \citep{Shafer2011} and it satisfies Ville's inequality, that is $P(\sup_{t \in \mathbb{N}} e_t \geq 1/\alpha) \leq \alpha$; see \citet{Ramdas2020} for a comprehensive analysis of non-negative martingales for statistical testing. In the example of the weather forecaster from the previous paragraph, this implies that with e-values the forecaster may reject the hypothesis of calibration at the level $\alpha$ as soon as the process $(e_t)_{t \in \mathbb{N}}$ exceeds $1/\alpha$, without having to fix a sample size in advance. The forecaster is allowed to monitor the PIT and the process $(e_t)_{t \in \mathbb{N}}$ in real time. In the special case of a simple null hypothesis, $\mathcal{P} = \{P_0\}$, e-values take the form of likelihood ratios or Bayes factors \citep{GrunwaldHeideETAL2019}. In particular, e-values for testing the null hypothesis that a quantity $Z \in (0,1)$ is uniformly distributed on the unit interval, short $\mathrm{UNIF}(0,1)$, are Lebesgue densities on $[0,1]$. It is therefore simple to construct valid e-values for testing probabilistic calibration or, detached from the forecasting context, goodness-of-fit testing of the $\mathrm{UNIF}(0,1)$ distribution, with ``valid'' referring to type one error guarantees. The non-trivial task in the construction of e-values is to achieve sufficient power to detect violations of the null hypothesis. \citet{Shafer2021} calls strategies for constructing e-values ``betting strategies'', since an e-value can be interpreted as a bet against the null hypothesis and the process $(e_t)_{t \in \mathbb{N}}$ corresponds to the capital over time if all -- or part of the -- gains are reinvested into the new bet at each $t \in \mathbb{N}$.

The contributions of this article are as follows. A brief review of related literature is given in Section \ref{sec:literature}. In Section \ref{sec:e_values} we construct e-values for testing the null hypothesis that a quantity $Z \in [0,1]$ is distributed according to $\mathrm{UNIF}(0,1)$, and for the analogous hypothesis that a discrete $R \in \{1, \dots, m\}$ follows a uniform distribution on the integers $1$ to $m$, short $\mathrm{UNIF}(\{1, \dots, m\})$.  These hypotheses appear naturally in calibration checks for probabilistic forecasts and ensemble forecasts, and precise definitions of forecast calibration are given in Section \ref{sec:preliminaries}. Furthermore, we characterize and construct e-values for the weaker hypotheses that a random variable $Z \in [0,1]$ with distribution $P$ is stochastically smaller than $\mathrm{UNIF}(0,1)$, short $P \leST \mathrm{UNIF}(0,1)$, which means $P(Z \leq z) \geq z$ for all $z \in (0,1)$. This hypothesis appears in a new definition of calibration for quantile forecasts which is closely related to usual probabilistic calibration. Section \ref{sec:e_values} is of interest independent from the forecasting context, and the methods can also be applied to general goodness-of-fit or stochastic order testing problems in sequential settings. Proofs of theoretical results are deferred to Appendix \ref{app:proofs}. In Section \ref{sec:simulation_study} we demonstrate that the e-values are competitive in terms of power when compared to established tests. Here, we suggest a new graphical tool to compactly display simulation results on test power, so-called test power heat matrices. Section \ref{sec:case_study} presents an application to testing calibration of postprocessed weather forecasts, and we show that the e-values give rise to novel and informative graphical tools for the sequential evaluation of forecast calibration.

\section{Related literature} \label{sec:literature}
Testing based on e-values and nonnegative supermartingales has received increasing interest in recent years, but many of the underlying concepts have been present in the literature much earlier. A historical review and comparisons to other concepts in the literature are given by \citet[Section 7]{GrunwaldHeideETAL2019}; the aim of this section is to draw connections to works closely related to the specific testing problems and methods considered in this article.

The earliest and still the most prominent method for sequential testing is Wald's Sequential Probability Ratio Test \citep[SPRT;][]{Wald1945, Wald1947}. In its simplest form, the SPRT tests whether a sequence $(Y_n)_{n \in \mathbb{N}}$ of independent random variables follow a density $f_0$, a simple null hypothesis. This can be achieved with the likelihood ratio process $e_t = \prod_{i=1}^{t}f_i(Y_i)/f_0(Y_i)$, which is a nonnegative supermartingale under the null hypothesis if the alternative densities $f_i$ are predictable, i.e.~depend only on $Y_1, \dots, Y_{i-1}$ for $i \in \mathbb{N}$. Two hypotheses considered in this article, the continuous and discrete uniform distribution (Sections \ref{subsec:continuous}, \ref{subsec:discrete}), are concerned with simple null hypotheses and could hence also be seen as an application of the SPRT. As discussed by \citet[Section 7]{GrunwaldHeideETAL2019}, the difference between the SPRT and testing based on e-values in the case of a simple null hypothesis is more a conceptual one, in that Wald proposed to stop testing and reject or accept the null hypothesis as soon as the likelihood ratio process crosses pre-specified boundaries. For the sequential evaluation and monitoring of forecasts, such rigid stopping criteria are usually not desirable.

Various tests for forecast calibration and, in particular, probabilistic calibration are available in the literature, see for example the references in \citet[Section 3.1]{Gneiting2007}, but these are non-sequential tests. There are some sequential testing methods which are applicable, but not necessarily tailored to the problems considered in this article, in particular, to calibration testing of weather forecasts. \citet{Howard2022} develop confidence sequences for cumulative distribution functions and quantiles, which could be used both for testing continuous and discrete uniformity and stochastic dominance. However, the test for stochastic dominance requires independent and identically distributed (iid) observations, whereas our hypothesis (Definition 3.4 (iii) in the next Section) allows dependence and time-varying distributions. Furthermore, there are sequential tests for exchangeability and the iid assumption, reviewed by \citet{Vovk2021a}, which are based on testing whether a sequence of so-called conformal p-values is independent and uniformly distributed, like the PIT of a probabilistically calibrated one step ahead forecast. While these methods test the same null hypothesis, the types of misspecification are often different from the ones in forecast evaluation.

\citet{HenziZiegel2021} and \citet{Choe2021} give a first application of e-values and related concepts to testing probability forecast superiority. Their articles are concerned with comparing probability predictions $p_t$, $q_t \in [0,1]$ for a binary event $Y_{t+h}\in \{0,1\}$ with respect to so-called proper scoring rules $S$, such as the squared error $S(p, y) = (p-y)^2$. In such comparisons, forecast superiority depends on both, the calibration of the forecasts, i.e.~whether the predicted probabilities $p_t$, $q_t$ match the observed event frequencies, and on their sharpness, with sharper forecasts being probability predictions close to $0$ or $1$ rather than to $0.5$. In contrast, this article develops tests for calibration only, which do not give an indication of the relative performance of competing methods.

\cite{Gneiting2021} provide a comprehensive overview of the available calibration notions for real-valued outcomes, which are traditionally formulated for one-period  predictions for an outcome $Y$. Possible extensions from the one-period setting to sequential notions of calibration can be found in \cite{Straehl2017}. In Definition \ref{def:time_series_calibration}, we suggest a natural sequential version of probabilistic calibration that differs slightly from \citet[Definition 2.7]{Straehl2017}. 

For multivariate outcomes, generalizations of probabilistic calibration have been proposed by \citet{Thorarinsdottir_et_al_2016, Ziegel_Gneiting_2014}. Checking for calibration of multivariate probabilistic predictions with respect to these notions of calibration results again in assessing whether a certain univariate statistic is uniformly distributed. Therefore, the methods suggested in this article directly carry over to this scenario. 


\section{Probabilistic calibration}\label{sec:preliminaries}
Let $Y$ be a real-valued outcome defined on a probability space $(\Omega,\F, P)$. We denote by $F$ the CDF associated with a (random) probabilistic forecast for $Y$. Definitions \ref{def:pit} and \ref{def:rankhist} below are standard notions of forecast calibration, see for example \citet{Gneiting2007, Gneiting2013}.

\begin{defn} \label{def:pit}
The \emph{probability integral transform (PIT)} of a forecast $F$ for an outcome $Y$ is defined as $Z_F(Y) = F(Y-) + V(F(Y) - F(Y-))$, where $F(y-) = \lim_{z \rightarrow y, z < y} F(z)$ and $V$ is a uniformly distributed random variable on $(0,1)$ independent of the pair $(F, Y)$.  The forecast $F$ is \emph{probabilistically calibrated} if $Z_F(Y)  \sim \mathrm{UNIF}(0,1)$.
\end{defn}

Of great importance in weather forecasting are ensemble forecasts \citep{Bauer2015}. An ensemble forecast is a collection of point forecasts generated by running a numerical weather prediction (NWP) model $m$ times, typically $m = 20$ to $50$, each time with different initial conditions, which allows to quantify the forecast uncertainty. We denote ensemble forecasts by vectors $\boldsymbol{X}=(X_1, \dots, X_m) \in \mathbb{R}^m$. To define calibration, let the (randomized) rank of $Y$ equal
\begin{equation} \label{eq:rank}
	\mathrm{rank}_{\boldsymbol{X}}(Y)=1 + \#\big\{i=1, \dots, m \mid X_i < Y\big\} + W \ \in \{1, \dots, m+1\},
\end{equation}
where $W$ is a random variable that equals zero almost surely if $N=\#\big\{i =1, \dots, m \mid X_i=Y\big\}$ is zero, and is uniformly distributed on $\big\{1, \dots, N\}$ otherwise.

\begin{defn} \label{def:rankhist}
An ensemble forecast $\boldsymbol{X}$ is \emph{rank calibrated} if $\mathrm{rank}_{\boldsymbol{X}}(Y) \sim \mathrm{UNIF}\big(\{1, \dots, m+1\}\big)$.
\end{defn}

\begin{rem}
Rank calibration is commonly assessed with the rank histogram, a plot of the empirical frequencies of the ranks over a sample \citep{Anderson1996}. We use a randomization of the rank in case of ties because with this convention the PIT and the rank of $Y$ are related via the equation $\mathrm{rank}_{\boldsymbol{X}}(Y) = 1 + \lfloor mZ_{F_{\boldsymbol{X}}}(Y) \rfloor$, where $F_{\boldsymbol{X}}$ is the empirical CDF (ECDF) of the ensemble $\boldsymbol{X}$.  The definition of the rank given in \eqref{eq:rank} slightly generalizes the unified PIT introduced by \citet[][p.~374]{Vogel2018} for evaluating precipitation forecasts, which randomizes ranks in case of multiple occurrences of zero forecasts.
\end{rem}

We proceed to introduce a new, closely related notion of calibration for quantile forecasts. Let $\alpha_0=0< \alpha_1 < \dots < \alpha_K < 1=\alpha_{K+1}$ be $K$ quantile levels. Instead of issuing a complete predictive CDF for the unknown quantity $Y$, we only aim to give point forecasts $q_1 \leq \dots \leq q_K$ for the quantiles of the distribution of $Y$ at levels $\alpha_1,\dots,\alpha_K$. Recall, that $q_i$ is an $\alpha_i$-quantile of $F$ if
\[
	F(q_i -) \le \alpha_i \le F(q_i), \quad i=1,\dots,K.
\]
Therefore the set of quantile forecasts can be interpreted as a partial disclosure of the predictive CDF $F$. With $q_0=-\infty$ and $q_{K+1}=\infty$, define 
\[
	F_u(y) := \sum_{i=1}^{K+1} (\alpha_i-\alpha_{i-1}) \one\{q_{i} \le y\}, \quad F_\ell(y) := \sum_{i=0}^{K} (\alpha_{i+1}-\alpha_{i}) \one\{q_i \le y\}, \quad y \in \mathbb{R}.
\] 

\begin{prop} \label{prop:qpit}
Let $0<\alpha_1 < \dots < \alpha_K < 1$ be $K$ quantile levels. Any (deterministic) CDF $F$ with corresponding quantiles $q_1 \leq \dots \leq q_K$ satisfies 
\begin{equation}\label{eq:stochorder}
	F_u(y) \le F(y) \le F_\ell(y), \quad y \in \mathbb{R}.
\end{equation} 
Furthermore 
\begin{equation}\label{eq:quantiles}
	F_{u}(q_i -) \le \alpha_i \le F_u(q_i) \quad \textrm{and} \quad F_{\ell}(q_i -) \le \alpha_i \le F_\ell(q_i), \quad i=1,\dots,K.
\end{equation}
\end{prop}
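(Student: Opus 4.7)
My plan is to reduce both claims to a single uniform description of $F_u$ and $F_\ell$: they depend on $y$ only through the integer $j(y) := \max\{\ell \in \{0,\dots,K+1\} : q_\ell \le y\}$ (the number of listed quantiles lying at or to the left of $y$, with the conventions $q_0 = -\infty$, $q_{K+1} = \infty$). A short computation using monotonicity of the $q_\ell$ will give the closed forms
\[
F_u(y) = \sum_{\ell=1}^{j(y)}(\alpha_\ell - \alpha_{\ell-1}) = \alpha_{j(y)}, \qquad F_\ell(y) = \sum_{\ell=0}^{j(y)}(\alpha_{\ell+1} - \alpha_\ell) = \alpha_{j(y)+1},
\]
with $\alpha_0 = 0$ and $\alpha_{K+1} = 1$.

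Given this, \eqref{eq:stochorder} follows immediately from the defining quantile property $F(q_i-) \le \alpha_i \le F(q_i)$. Indeed, writing $j = j(y)$, we have $q_j \le y < q_{j+1}$, so monotonicity of $F$ yields
\[
F(y) \ge F(q_j) \ge \alpha_j = F_u(y) \quad \text{and} \quad F(y) \le F(q_{j+1}-) \le \alpha_{j+1} = F_\ell(y).
\]
The boundary cases $j = 0$ and $j = K$ are absorbed by the conventions $F(-\infty) = 0 = \alpha_0$ and $F(\infty-) = 1 = \alpha_{K+1}$.

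For \eqref{eq:quantiles}, the only thing requiring care is the handling of ties among the $q_i$. Given $i \in \{1,\dots,K\}$, I would introduce $i_* = \min\{\ell : q_\ell = q_i\}$ and $i^* = \max\{\ell : q_\ell = q_i\}$, so that $i_* \le i \le i^*$. Evaluating the formula above at $y = q_i$ gives $j(q_i) = i^*$, while letting $y \uparrow q_i$ through values strictly less than $q_i$ gives $j(y) = i_* - 1$ for such $y$. Plugging in yields
\[
F_u(q_i-) = \alpha_{i_*-1} \le \alpha_i \le \alpha_{i^*} = F_u(q_i), \qquad F_\ell(q_i-) = \alpha_{i_*} \le \alpha_i \le \alpha_{i^*+1} = F_\ell(q_i),
\]
which is exactly \eqref{eq:quantiles}.

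The only step that needs a moment's thought is the derivation of the closed-form expressions for $F_u$ and $F_\ell$ in the presence of ties, i.e.~establishing $\one\{q_\ell \le y\} = \one\{\ell \le j(y)\}$ for every $\ell$. This follows because $q_{j(y)+1} > y$ strictly by definition of $j(y)$, and monotonicity of the sequence $q_\ell$ propagates this to all larger indices. Everything else is bookkeeping.
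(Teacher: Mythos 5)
Your proof is correct and follows essentially the same route as the paper's: both rest on the observation that $F_u$ and $F_\ell$ are step functions taking the values $\alpha_j$ and $\alpha_{j+1}$ between consecutive distinct quantiles, combined with the defining inequalities $F(q_i-) \le \alpha_i \le F(q_i)$; your index $j(y)$ and the pair $(i_*, i^*)$ play exactly the role of the paper's case analysis and its tied-index set $\mathcal{I}_j$. The closed-form reformulation $F_u(y)=\alpha_{j(y)}$, $F_\ell(y)=\alpha_{j(y)+1}$ is sound and, if anything, handles the boundary and tie cases slightly more uniformly than the paper's argument.
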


Similar to the classical PIT, we define
\[
	Z_{F_u}(Y) := V F_u(Y) + (1-V) F_u(Y-) \hspace{0.2cm}\textrm{and}\hspace{0.2cm} Z_{F_\ell}(Y) := V F_\ell(Y) + (1-V) F_\ell(Y-),
\]
where $V$ is a uniformly distributed random variable on $(0,1)$ independent of $Y$ and the quantile predictions $q_1, \dots, q_K.$ In the sequel, $Z_{F_u}(Y)$ and $Z_{F_l}(Y)$ will be referred to as the \emph{upper} and \emph{lower quantile PIT}. By \eqref{eq:stochorder} these quantities satisfy
\[
	Z_{F_u}(Y) \le Z_F(Y) \le Z_{F_\ell}(Y)
\]
almost surely, and $Z_{F_\ell}(Y)-Z_{F_u}(Y) \le \sup_{i=0, \dots, K}(\alpha_{i+1}-\alpha_i)$ with equality if $\alpha_{i+1} - \alpha_i = c > 0$ for all $i$. In this case, which is the important special case of equispaced quantile levels, $Z_{F_\ell}(Y) = Z_{F_u}(Y) + c$. 

\begin{defn} \label{def:calibration_quantile}
A set of quantile forecasts $q_1 < \dots < q_K$ is \emph{probabilistically calibrated} if
\[
	Z_{F_u}(Y) \leST \mathrm{UNIF}(0,1) \leST Z_{F_\ell}(Y).
\]
\end{defn}

\begin{rem}
The functions $F_u$ and $F_\ell$ are defective CDFs in the sense that $\lim_{y \to \infty} F_u(y) = \alpha_K < 1$ and $\lim_{y \to -\infty} F_{\ell}(y) = \alpha_1 > 0$, respectively, but they satisfy the remaining conditions for being a CDF. Note that the distribution of the upper (lower) quantile PIT is stochastically smaller (greater) than $\mathrm{UNIF}(0,1)$, which implies that its CDF is pointwise greater (smaller) than the uniform CDF for probabilistically calibrated quantile forecasts.
\end{rem}

The definitions of calibration introduced so far do not include any notion of time. In practice, for example for the PIT, one observes a time series $(F_t, Y_t)_{t \in \mathbb{N}}$ of forecasts and observations, where $F_t$ is the forecast for a lagged observation $Y_{t + h}$, with a fixed integer lag $h \geq 1$. The definition below formalizes calibration for such sequential settings.

\begin{defn} \label{def:time_series_calibration}
Let $(\Omega, \mathcal{F},P)$ be a probability space with a filtration $(\mathcal{F}_t)_{t \in \N}$, and $h$ be a positive integer. Let further $(Y_t)_{t \in \N}$ be an adapted sequence of observations.
\begin{itemize}
\item[(i)] A sequence of probability forecasts $(F_{t})_{t \in \N}$ is \emph{probabilistically calibrated at lag $h$} if
\[
	\mathcal{L}(Z_{F_t}(Y_{t+h}) \mid Z_{F_j}(Y_{j + h}), 0\leq j \leq t - h) = \mathrm{UNIF}(0,1), \ t \in \mathbb{N}.
\]
\item[(ii)] A sequence of ensemble forecasts $(\boldsymbol{X}_t)_{t \in \N}$ of size $m$ is \emph{rank calibrated at lag $h$} if
\[
	\mathcal{L}(\mathrm{rank}_{\boldsymbol{X}_t}(Y_{t+h}) \mid \mathrm{rank}_{\boldsymbol{X}_j}(Y_{j + h}), 0 \leq j \leq t - h) = \mathrm{UNIF}(\{1, \dots, m + 1\}), \ t \in \mathbb{N}.
\]
\item[(iii)] A sequence of quantile forecasts $(q_{1;t}, \dots, q_{K;t})_{t \in \N}$ is \emph{probabilistically calibrated at lag $h$} if
\[
	Z_{F_{u;t}}(Y_{t+h}) \leST \mathrm{UNIF}(0,1) \leST Z_{F_{\ell;t}}(Y_{t+h})
\]
conditional on $Z_{F_{u;j}}(Y_{j+h}), Z_{F_{\ell;j}}(Y_{j+h})$, $0 \leq j \leq t - h$, for $t \in \mathbb{N}$.
\end{itemize} 
\end{defn}
Note that for $t\leq h$ there is no conditioning in all cases, and the requirements in (i)-(iii) are understood to hold unconditionally. Furthermore we silently assume existence of a sequence $(V_t)_{t \in \N}$ of adapted, independent $\mathrm{UNIF}(0,1)$ variables defined on the probability space, independent of all other objects, in parts (i) and (iii) of Definition \ref{def:time_series_calibration} to define the PIT and quantile PIT. Similarly, existence of an analogous sequence $(W_t)_{t \in \N}$ for the randomization of the ranks in part (ii) is assumed. For forecasts with lag $h > 1$, the definition does not condition on $Z_{F_j}(Y_{j+h})$ with $t - h < j < t$, since the corresponding observations are not yet available at time $t$ when the forecasts are issued, and in this case, the joint distribution of the PIT (or ranks, quantile PIT) which are less than $t$ time units apart is not specified.

\begin{rem}
With lag $h = 1$, the definition of calibration implies that the sequence of the PIT, $(Z_{F_t}(Y_{t+1}))_{t \in \mathbb{N}}$, or of the ranks, $(\mathrm{rank}_{\boldsymbol{X}_t}(Y_{t+1}))_{t \in \mathbb{N}}$, are independent, since the conditional distributions in part (i) or (ii) do not depend on the past values in the sequence. Indeed, for a probabilistically calibrated sequence of forecasts $ (F_{t})_{t \in \N}$ and $v,w \in [0,1]$, it holds
\[
	P(Z_{F_1}(Y_2)\leq v, Z_{F_2}(Y_3)\leq w) = P( Z_{F_2}(Y_3)\leq w \mid Z_{F_1}(Y_2) \leq v) P(Z_{F_1}(Y_2)\leq v)= v w,
\]
and it follows inductively that $(Z_{F_t}(Y_{t + 1}))_{t \in \mathbb{N}}$ are independent. Hence the definition of probabilistic calibration corresponds to the classic definition given in  \citet{DieboldGuntherETAL1998}. However, for lag $h > 1$ and for the quantile PIT, where no particular conditional distribution is assumed, there may be dependence in the sequence of PITs, ranks, or quantile PITs. 
\end{rem}

\begin{exmp}
This example is a sequential adaption of Example 3 in \citet{Gneiting2007}, see also \citet[Section 6]{Tsyplakov2011}. Consider a sequence of random variables $(Y_t)_{t\in \N}$, where $Y_{t+1} \sim \mathcal{N}(Y_t,1)$ and $Y_1 \sim \mathcal{N}(0,1)$, and the sequence of \emph{unfocused forecasts} 
\begin{equation*}
	F_t(y) = \frac{1}{2}\big(\Phi(y-Y_t)+\Phi(y-Y_t-\eta_t)\big),
\end{equation*}
where $(\eta_t)_{t \in \N}$ is an independent sequence for which $\eta_t$ attains the values $\pm 1$ with equal probability for each $t$. Since $F_t(Y_{t+1})$ is distributed as $\frac{1}{2}\big(\Phi(Z_t)+\Phi(Z_t-\eta_t)\big)$ for a sequence $(Z_n)_{n \in \N}$ of iid standard normally distributed $Z_n$, it is analogous as in the non-sequential setting to show that this sequence of forecasts is probabilistically calibrated at lag 1 in the sense of Definition \ref{def:time_series_calibration}, see \citet[Example 3]{Gneiting2007}. However, the predictive distributions are not correct in the sense that $\mathcal{L}(Y_{t+1}\mid F_t)= F_t$ does not hold almost surely, which is a stronger property called autocalibration \citep{Gneiting2013}. This demonstrates that in Definition \ref{def:time_series_calibration}, it is crucial that the conditioning only involves the past values of the PIT, ranks, or quantile PIT, as otherwise different notions of calibration would be obtained.
\end{exmp}

\begin{rem}
The above example illustrates that autocalibration and our sequential definition of probabilistic calibration are not equivalent in general. \citet[Theorem S3.1]{Gneiting2021} show that for $h=1$ these two notions of calibrations coincide, if and only if, the forecast $F_t$ only uses information of the past observations $Y_1, \dots, Y_{t-1}$, that is, $F_t$ is measurable with respect to $\sigma(Y_1, \dots, Y_{t-1})$. In practice this is usually not the case, as predictions do not only depend on the past time series of observations but also include side information, like expertise or predictions from numerical weather prediction models, and hence a sequence of probabilistically calibrated forecasts is not necessarily autocalibrated.
\end{rem}

\section{E-values} \label{sec:e_values}

\subsection{E-values in sequential settings}
We proceed with formal definitions and properties of e-values in sequential settings. The notation largely follows \citet{Vovk2021}, but we formalize a new concept of lagged sequential e-values which is particularly relevant in forecast evaluation. Throughout this section, let $(\Omega, \mathcal{F})$ be the underlying measurable space and $\mathcal{P}$ be a suitable set of distributions.

\begin{defn}
Let $\mathcal{H}, \mathcal{H}' \subset \mathcal{P}$. An \emph{e-value for $\mathcal{H}$} is a non-negative random variable $E$ such that $\E_P E \le 1$ for all $P \in \mathcal{H}$. An e-value for $\mathcal{H}$ is \emph{testing $\mathcal{H}$ against $\mathcal{H}'$} if $\E_Q E > 1$ for all $Q\in\mathcal{H}'$.
\end{defn} 

\begin{defn} \label{def:sequential_e}
Let $(\mathcal{F}_t)_{t \in \N}$ be a filtration, $h$ be a positive integer and $\mathcal{H}, \mathcal{H}' \subset \mathcal{P}$. Adapted non-negative random variables $(E_t)_{t \in \N}$ are called \emph{sequential e-values for $\mathcal{H}$ at lag $h$} if $\E_P(E_t \mid \F_{t-h})\leq 1$ for all $P \in \mathcal{H}$ and for all $t\in \N$. Sequential e-values for $\mathcal{H}$ at lag $h$ are \emph{testing $\mathcal{H}$ against $\mathcal{H}'$} if $\E_Q (E_t \mid \F_{t-h}) > 1$ for all $Q\in\mathcal{H}'$ and for all $t \in \N$. For $t\leq h$ expectations are understood unconditionally.  
\end{defn}

A motivation for Definition \ref{def:sequential_e} is the sequential evaluation of forecasts with prediction horizon $h$. At time $t$, we observe the current quantity of interest, $Y_t$, and the forecast $F_t$ for $Y_{t + h}$. We are interested in testing the null hypothesis that the forecasts are calibrated, for example in the sense of Definition \ref{def:time_series_calibration}. Forecast evaluation is normally based on the observation and on information available at the time of forecasting. Therefore, an e-value $E_t$ for testing calibration at time $t \geq h + 1$ should satisfy $\mathbb{E}(E_t \mid \mathcal{F}_{t - h}) \leq 1$, where $(\mathcal{F}_t)_{t \in \mathbb{N}}$ is a suitable filtration. But $\mathbb{E}(E_t \mid \mathcal{F}_{j}) \leq 1$ may be violated for $t-h < j < h$ even for calibrated forecasts, since the conditional expectation involves information not available at the time of forecasting. Therefore, e-values $(E_t)_{t \in \mathbb{N}}$ for testing forecast calibration should be sequential at lag $h$.

For lag $h = 1$, sequential e-values can be combined by their cumulative product. For $h > 1$ we can combine e-values with a U-statistics approach (see \citet{Vovk2021} and \citet[Proposition 3.4]{HenziZiegel2021}).

\begin{prop} \label{prop:u_statistics} Let $(E_t)_{t \in \N}$ be sequential e-values for $\mathcal{H}\subset \mathcal{P}$ at lag $h$ adapted to the filtration $(\mathcal{F}_t)_{t \in \N}$. Then for all $T\geq h+1$, with $I_k(T)=\{k+hs: s= 0,\dots, \lfloor (T-k)/h\rfloor\}$,
\begin{equation} \label{eq:u_statistics}
	e_T = \frac{1}{h} \sum_{k=1}^{h} \prod_{l \in I_k(T)} E_l
\end{equation}
is $\F_T$ measurable and an e-value for $\mathcal{H}$. For any stopping time $\tau$, the process $(e_t)_{t \in \mathbb{N}}$ satisfies
\begin{equation} \label{eq:stop_lag_h}
	\mathbb{E}_{P}(e_{\tau+h-1}) \leq 1, \ P \in \mathcal{H}.
\end{equation}
\end{prop}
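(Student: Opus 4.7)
The plan is to decompose the sequential e-values $(E_t)$ into $h$ interlaced subsequences, each of which becomes a non-negative supermartingale on a suitably thinned filtration, and then apply optional stopping to each piece separately before averaging.

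Concretely, for each $k \in \{1,\dots,h\}$, set $\mathcal{G}_s^{(k)} := \mathcal{F}_{k+sh}$ and $M_s^{(k)} := \prod_{j=0}^{s} E_{k+jh}$ for $s \geq 0$. Each $M_s^{(k)}$ is $\mathcal{G}_s^{(k)}$-measurable, and the lag-$h$ sequential e-value property of Definition \ref{def:sequential_e} gives
$$\mathbb{E}_P\bigl(M_{s+1}^{(k)} \mid \mathcal{G}_s^{(k)}\bigr) = M_s^{(k)} \cdot \mathbb{E}_P\bigl(E_{k+(s+1)h} \mid \mathcal{F}_{k+sh}\bigr) \leq M_s^{(k)},$$
so $(M_s^{(k)})_{s \geq 0}$ is a non-negative $\mathcal{G}^{(k)}$-supermartingale with $\mathbb{E}_P(M_0^{(k)}) = \mathbb{E}_P(E_k) \leq 1$. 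The $\mathcal{F}_T$-measurability of $e_T$ is immediate since every factor $E_l$ with $l \leq T$ is $\mathcal{F}_T$-measurable. For fixed $T \geq h+1$ one has $\prod_{l \in I_k(T)} E_l = M_{\lfloor (T-k)/h \rfloor}^{(k)}$, and the supermartingale property at a deterministic time yields $\mathbb{E}_P(M_s^{(k)}) \leq 1$; averaging over $k$ then gives $\mathbb{E}_P(e_T) \leq 1$, which establishes the e-value property.

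For the stopping-time statement, the crux is to translate an $\mathcal{F}$-stopping time $\tau$ into a stopping time for each thinned filtration $\mathcal{G}^{(k)}$. The natural candidate is $N_k(\tau) := \lfloor(\tau+h-1-k)/h\rfloor$, chosen precisely so that $\prod_{l \in I_k(\tau+h-1)} E_l = M_{N_k(\tau)}^{(k)}$. The computation
$$\{N_k(\tau) \leq n\} = \{\tau \leq nh+k\} \in \mathcal{F}_{nh+k} = \mathcal{G}_n^{(k)}$$
(using the integer-valuedness of $\tau$) shows that $N_k(\tau)$ is a $\mathcal{G}^{(k)}$-stopping time. Doob's optional stopping theorem for non-negative supermartingales, valid for arbitrary stopping times once $M_\infty^{(k)}$ is defined as the a.s.\ limit guaranteed by the martingale convergence theorem, then gives $\mathbb{E}_P(M_{N_k(\tau)}^{(k)}) \leq \mathbb{E}_P(M_0^{(k)}) \leq 1$. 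Summing and dividing by $h$ delivers $\mathbb{E}_P(e_{\tau+h-1}) \leq 1$.

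The main obstacle is the stopping-time verification: one needs the specific offset $h-1$ in the index $\tau+h-1$ together with integer-valuedness of $\tau$ to arrange that $\{N_k(\tau) \leq n\}$ lands inside $\mathcal{F}_{nh+k}$. Once this is confirmed, the rest of the argument is just applying single-supermartingale optional stopping $h$ times in parallel and averaging.
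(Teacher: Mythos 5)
Your proof is correct and follows essentially the same route as the paper: decompose into $h$ interlaced products, note each is a non-negative supermartingale with respect to its thinned filtration, verify that $\tau+h-1$ induces a stopping time for each thinned filtration (your $N_k(\tau)$ corresponds exactly to the paper's $f_k(\tau)$), and apply optional stopping before averaging. The only difference is cosmetic indexing of the sub-filtrations, and your verification of the stopping-time property is in fact spelled out more explicitly than in the paper.
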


In the case $h = 1$, $(e_t)_{t \in \mathbb{N}}$ is a non-negative supermartingale. This implies that $\mathbb{E}_P(e_{\tau}) \leq 1$ for all $P \in \mathcal{H}$ and all stopping times $\tau$, which is the characterizing property of \emph{e-processes} \citep{Ramdas2022eprocess} and a strictly weaker property than being a non-negative supermartingale. This allows valid inference at arbitrary, not even necessarily pre-specified stopping times. For rejecting the null hypothesis at a fixed level $\alpha \in (0,1)$, one may apply the aggressive stopping criterion
\[
	\tau_{\alpha} = \inf\{t \in \mathbb{N} : e_t \geq 1/\alpha\}.
\]
In case when $h > 1$, the process $(e_t)_{t \in \mathbb{N}}$ is in general not an e-process and hence also not a non-negative supermartingale. The reason for this is that for $k = 1, \dots, h$, the sub-process $M^{[k]} =  (\prod_{l \in I_k(t)} E_l)_{t \in \mathbb{N}}$ is a supermartingale only with respect to the filtration $\mathfrak{F}^{[k]} = (\mathcal{F}_{\lfloor (t-k)/h\rfloor h+k})_{t \in \mathbb{N}}$, but not with respect to $\mathfrak{F} = (\mathcal{F}_t)_{t \in \N}$, and averages of supermartingales with respect to different filtrations are not supermartingales in general. It may hold that for some $t_0 \geq h$ and $j \in \{1, \dots, h - 1\}$, the e-value $E_{t_0}$ satisfies $\mathbb{E}_P(E_{t_0}\mid \mathcal{F}_{t_0-j}) > 1$, which allows to find $\mathfrak{F}$-stopping times $\tau$ such that $\mathbb{E}_P(e_{\tau}) > 1$. In practice, \eqref{eq:stop_lag_h} means that when deciding to stop testing at a time $t_0$, this decision must not involve information available after time $t_0-h+1$, or otherwise one also has to include the future e-values $E_{t_0+1}, \dots, E_{t_0+h-1}$ in the evaluation. However, as shown in Appendix \ref{app:proofs}, rescaling the process $e_T$ by $(e\log(h))^{-1} < 1$, with $e = \exp(1)$, still allows to construct threshold tests at level $\alpha \in (0,1)$. More precisely, for $h > 1$ the stopping time
\[
	\tau_{\alpha,h} = \inf\left\{t \in \mathbb{N} : \frac{1}{he\log(h)} \sum_{k=1}^{h} \sup_{s \leq t} \prod_{l \in I_k(s)} E_l \geq 1/\alpha\right\}
\]
guarantees that $P(\tau_{\alpha,h} < \infty) \leq \alpha$ for all $P \in \mathcal{H}$. If, as in this article, the e-values $E_t$ are of the form $E_t = f_{t-h}(Z_t)$, where $f_{t-h}$ is a function determined at time $t-h$ and $Z_t$ an $\mathcal{F}_t$-measurable observation, then a different approach would be to redefine $\tilde{e}_t = e_t \min_z(f_{t+1}(z), \dots, f_{t+h-1}(z))$. This is the strategy implicitly used by \citet{HenziZiegel2021}, and it guarantees that $(\tilde{e}_t)_{t \in \mathbb{N}}$ is an e-process. However, the correction may be very conservative unless the functions $(f_t)_{t \in \mathbb{N}}$ are guaranteed to be bounded away from zero, which in turn limits the power of the test since $\mathbb{E}(f_{t-h}(Z_t) \mid Z_t) \leq 1$ must still hold.

In the following sections we construct sequential e-values for the continuous and for the discrete uniform distribution and for testing stochastic dominance relations with respect to the uniform distribution. General construction principles for e-values, and possible caveats, are explained in the special case of the continuous uniform distribution in Section \ref{subsec:continuous}, but also apply to the other situations in Sections \ref{subsec:discrete} and \ref{subsec:approx_cuf}. An \textsf{R} package implementing the methods is available on GitHub (\url{https://github.com/AlexanderHenzi/epit}), and technical details are given in Appendix \ref{app:implementation}.

\subsection{Continuous uniform distribution} \label{subsec:continuous}
Let $Z$ be a random variable with values in $[0,1]$. We are interested in testing whether $Z$ is uniformly distributed, that is, constructing e-values for the hypothesis
\begin{equation} \label{eq:cuf}
	\mathcal{H}_{\mathrm{CUF}} := \{\mathrm{UNIF}(0,1)\}.
\end{equation}
The underlying set of distributions, $\mathcal{P}$, simply consists of all distributions on the interval $[0,1]$. As a first strategy, we suggest to test $\mathcal{H}_{\mathrm{CUF}}$ against the family of beta distributions which we denote by $\mathcal{H}'$. Any $P \in \mathcal{H}'$ can be parametrized by a vector in the set
\[
	\Theta=\{(\alpha, \beta) \in \R^2 \mid \alpha > 0, \hspace{0.1cm}\beta >0 \}.
\] 
Let $P_{(\alpha, \beta)}$ denote the beta distribution with parameters $(\alpha, \beta)$, so that $\mathcal{H}_{\mathrm{CUF}}=\{P_{(1,1)}\}$. As mentioned in the introduction, the hypothesis $\mathcal{H}_{\mathrm{CUF}}$ is simple, and for any $(\alpha, \beta) \neq (1,1)$ the density, or likelihood ratio, with respect to $\mathrm{UNIF}(0,1)$,
\[
	E^{\alpha,\beta}(Z) := \frac{1}{B(\alpha,\beta)} Z^{\alpha - 1}(1-Z)^{\beta - 1}
\]
is an e-value testing $\mathcal{H}_{\mathrm{CUF}}$ against $\{P_{(\alpha, \beta)}\}$, where $B(\cdot,\cdot)$ denotes the beta function. \citet{GrunwaldHeideETAL2019} suggest to determine e-values in such a way that the expected logarithm of the e-value is maximal in the worst case scenario, and refer to e-values with this property as growth rate optimal in worst case (GROW). Following this criterion, parameters $(\alpha^*, \beta^*) \in \Theta$ would have to be found such that
\begin{equation*}
	\inf_{(\alpha, \beta)\in \Theta}\E_{Z \sim P_{(\alpha, \beta)}}[\log(E^{\alpha^*, \beta^*}(Z))]
\end{equation*}
is maximal. However, this approach is only feasible if either $\alpha$ or $\beta$ (or their ratio or difference) is fixed, which yields a one-parameter exponential family for which results of \citet{GrunwaldHeideETAL2019} are applicable. In many situations this is a prohibitive limitation, since no sufficient prior knowledge is available to restrict the parameters. On the other hand, if both $\alpha$ and $\beta$ can take any positive values, the GROW e-value is constant $1$, because the infimum in the equation above is negative unless $\alpha^* = \beta^* = 1$.

As a different strategy in sequential settings, we propose to estimate $(\alpha,\beta)$ by maximum likelihood estimation (MLE) to optimize power for the next e-value, in the spirit of the betting strategies suggested by \citet{Waudby-SmithRamdas2020} for estimating a bounded mean. Given a sequence of observations $(z_t)_{t \in \N} \subseteq[0,1]$, one can successively calculate e-values for $\mathcal{H}_{\mathrm{CUF}}$ as follows: For $t\geq2$ estimate parameters $(\hat{\alpha}_{t},\hat{\beta}_{t})$ by MLE, that is
\begin{equation} \label{eq:mle}
	(\hat{\alpha}_{t},\hat{\beta}_{t})=  \argmax_{(\alpha, \beta)\in \Theta}\sum_{i=1}^{t} \log\big(p_{(\alpha, \beta)}(z_i)\big),
\end{equation}
where $p_{(\alpha, \beta)}$ denotes the Lebesgue density of $P_{(\alpha, \beta)}$. Set $E_1= E_{2}=1$ and calculate 
\begin{equation} \label{eq:beta_e}
	E_{t+1}= E^{\hat{\alpha}_t,\hat{\beta}_t}(z_{t+1})
\end{equation}
to obtain a sequence $(E_t)_{t \in \N}$ of e-values for testing the null hypothesis that the sequence $(z_t)_{t \in \N}$ is i.i.d.~$\mathrm{UNIF}(0,1)$.

To construct e-values at lag $h$, parameter estimation can be performed separately on all all subsamples with indices $\{k + hs \mid s = 0, 1, \dots\}$, $k=1, \dots, h$. That is, for $t\geq 2h$ calculate 
\begin{equation} \label{eq:mle_lag}
	(\hat{\alpha}^k_{t},\hat{\beta}^k_{t})= \argmax_{(\alpha, \beta)\in \Theta}\sum_{s: k + hs \leq t} \log\big(p_{(\alpha, \beta)}(z_{k+hs})\big), \quad k=1, \dots, h.
\end{equation}
Set $E_1=\dots = E_{2h}=1$ and, for $t = h, h + 1, \dots$, 
\[
	E_{k + th}= E^{\hat{\alpha}^k_t,\hat{\beta}^k_t}(z_{k + th}), \quad k = 1, \dots, h.
\]
Then $(E_t)_{t \in \N}$ are sequential e-values at lag $h$ for the null hypothesis that $z_t \sim \mathrm{UNIF}(0,1)$ conditional on $z_1, \dots, z_{t-h}$ for all $t$, and these e-values can be combined with the formula \eqref{eq:u_statistics}.

\begin{rem}
Estimating the parameters by maximum likelihood for testing the $\mathrm{UNIF}(0,1)$ hypothesis can be regarded as an instance of the running MLE testing method proposed by \citet[Section 7]{Wasserman2020}, and has in fact already been proposed early on by \citet[Equation 10:10]{Wald1947}. Unlike the GROW criterion, sequentially computing the MLE is a sequential betting strategy which tries to adapt to the current alternative in the available data. The strategy to maximize the expected logarithm of a product is sometimes referred to as Kelly betting in reference to \citet{Kelly1956}, who, however, analyzed simple alternative hypotheses rather than adaptive strategies.
\end{rem}

The beta family of distributions is flexible enough to adapt the most common violations of uniformity which occur in practice, namely increasing, decreasing, unimodal and U-shaped densities. This also covers the typical shapes of the PIT distribution for biased and over- or underdispersed probabilistic forecasts. However, in certain applications or data-rich situations, it may be desirable not to restrict the shape of the e-values to a parametric family. A powerful tool for such cases is kernel density estimation, which allows with a sample $\boldsymbol{\zeta}^k = (\zeta_1, \dots, \zeta_k) \in [0,1]^k$ to approximate any density on the unit interval by a mixture
\[
	E^{K,b, \boldsymbol{\zeta}^k}(Z) = \sum_{i = 1}^k \frac{1}{b}K\left(\frac{Z - \zeta_i}{b}\right),
\]
where $K$ is a suitable kernel density and $b > 0$ the bandwidth. The selection of the bandwidth, and even of the kernel $K$, can be done in a sequential fashion like the parameter estimation for the beta e-values. For the e-value at time $t$, the sample $\boldsymbol{\zeta}^{t-1}$ can be taken as $(z_1, \dots, z_{t-1})$ for lag $1$ forecasts, and the e-value $E^{K,b, \boldsymbol{\zeta}^{t-1}}$ is evaluated at the observation $z_t$. For higher lags, the procedure is separated by subsamples with lag $h$ like for MLE in \eqref{eq:mle_lag}.

Compared to the e-values based on beta distributions, the kernel density approach offers more flexibility, which on the other hand also implies more implementation decisions, especially due to the complicating fact that the domain $[0,1]$ is a bounded interval, which calls for boundary corrected density estimates. We describe our implementation in Appendix \ref{app:implementation}. Furthermore, while MLE for parameter estimation in the beta e-values is theoretically motivated by maximizing the growth rate in sequential settings, estimation methods for kernel densities are often based on different criteria, such as integrated mean squared error, which do not have a natural interpretation in the context of e-values. Nevertheless, our simulations illustrate that this approach may still yield good results in practice. 


In the practical implementation of e-values, some details should be taken into account. Under the null hypothesis \eqref{eq:cuf}, the boundary points $0$ and $1$ occur with probability zero, but in applications, observations of exactly $0$ or $1$ appear in most datasets, for example due to rounding. This may be problematic for the construction of the e-values (for example, the estimator \eqref{eq:mle} diverges if $z_i \in \{0, 1\}$ for some $i$), and it may lead to e-values equal to zero or infinity. In our implementation, we decided to ignore observations in $\{0, 1\}$ both in the parameter estimation and when computing the e-values; the latter corresponds to setting $E^{\alpha, \beta}(z) = E^{K, b, \boldsymbol{\zeta}}(z) = 1$, for $z \in \{0, 1\}$, which is a valid strategy since it does not change the expectation of the e-value under the null hypothesis. The rationale is that if zeros or ones occur only rarely, then omitting them should not influence the results. On the other hand, if they occur frequently then it is questionable whether a test of the $\mathrm{UNIF}(0,1)$ hypothesis is really necessary in the given problem since the null hypothesis is obviously false.

A second practical issue is that e-values of exactly zero should be prevented since the e-values lose their power once a level of zero is reached. In the betting interpretation of \citet{Shafer2021}, this would mean that all capital for betting against the null hypothesis is lost. For the beta distributions, zeros can only occur when $Z \in \{0, 1\}$, but the kernel e-values may be zero also inside $(0,1)$ when there are no data points in some region. A simple correction is to replace the e-values $(E_t)_{t \in \mathbb{N}}$ by convex combinations $(\lambda_t + (1-\lambda_t) E_t)_{t \in \mathbb{N}}$ for some $\lambda_t > 0$. We set $\lambda_t = 1/t$ in our implementation, since the danger of zero e-values is typically larger for smaller sample sizes, where the sequential parameter estimation is less stable or observations may be sparse in some subsets of $(0,1)$. When constructing e-values sequentially, one may also set the first $n_0$ e-values to $1$ and start the sequential parameter estimations with a slightly larger sample size, which increases stability. We set $n_0 = 10$ for both the beta and kernel e-values; the minimum $n_0$ to perform MLE for the beta e-values is $n_0 = 2$.

\subsection{Discrete uniform distribution} \label{subsec:discrete}
For $m\geq 1$ the null hypothesis in the discrete case is
\[
	\mathcal{H}_{\mathrm{DUF}}:= \big\{\mathrm{UNIF}(\{1, \dots, m\})\big\},
\]
and the underlying set $\mathcal{P}$ consists of all probability distributions on $\{1, \dots, m\}$. Any $P \in \mathcal{P}$ can be parametrized by $m$ weights in the set $\{ \boldsymbol{w} \in [0,1]^{m} \mid \sum_{i=1}^m w_i = 1 \}$, and $\mathcal{H}_{\textrm{DUF}}=\{ P_{\boldsymbol{w}_0}\}$ for $ \boldsymbol{w}_0=(1/m)_{i=1}^{m}$. Let $R$ be a random variable with values in $\{1, \dots, m\}$. Since $\mathcal{H}_{\textrm{DUF}}$ is a simple null hypothesis, the likelihood ratio
\[
	E(R)=\frac{p_{1}(R)}{p_0(R)}=m\hspace{1mm} p_{1}(R)
\]
is an e-value testing $\mathcal{H}_{\textrm{DUF}}$ against the simple alternative hypothesis $\{P_1\}$, where $P_1 \in \mathcal{P}$ has probability mass function $p_1$. Like in the continuous case, we suggest a parametric and a nonparametric method for constructing $p_1$ sequentially.

For parametric e-values, we propose to use the beta-binomial probability mass function
\[
	p^{\alpha, \beta}(r) = \binom{m - 1}{r - 1}\frac{B(\alpha - r + 1, \beta + m - r)}{B(\alpha, \beta)}, \quad \alpha, \beta > 0
\]
with support in $\{1, \dots, m\}$. This yields e-values with properties similar to the beta e-values, and estimation can again be performed sequentially with the maximum likelihood method. Like the beta distribution on $[0,1]$, the beta-binomial distribution can approximate increasing, decreasing, unimodal and U-shaped probability mass functions on $\{1, \dots, m\}$.

The most natural nonparametric method for obtaining $p_1$ is the empirical distribution. That is, for a given sample $r_1, \dots, r_{t} \in \{1, \dots, m\}$, $p_1(R) = p_{1;t}(R)$ can be set as the empirical frequency of $R$ in the sample up to time $t$, and at time $t + 1$, the frequencies are updated accordingly with the value of $r_{t+1}$. A drawback of this procedure is that the e-values may attain zero if one of the frequencies $p_{1;t}(j)$, $j = 1, \dots, m$, is zero. To prevent this, one may start with a particular $\mathcal{P}_{1} \in \mathcal{P}$, which serves as a first guess for what the actual frequencies will look like. For example, a neutral first guess is $P_1 = P_{\boldsymbol{w}_0}$, and at time $t$, the weights could be updated with the formula
\[
	\boldsymbol{w}_t = \left(\frac{k_1^t+1}{t +m}, \dots, \frac{k_m^t+1}{t + m} \right),
\]
where $k_j^t = \#\{i = 1, \dots, t \mid r_i = j\}$. Here we successively update with the empirical distribution and each component of the weight vector contains one artificial observation. In comparison with the beta-binomial weights, it has to be expected that for even moderate $m$ (say, $20$ or $50$, as common in ensemble forecasting), much larger sample sizes are required to recover the actual underlying distribution.

\subsection{Stochastic ordering with respect to the uniform distribution}\label{subsec:approx_cuf}
Instead of testing whether $Z \in [0,1]$ is distributed according to $\textrm{UNIF}(0,1)$, one is sometimes only interested in whether it attains systematically lower or higher values than expected under $\textrm{UNIF}(0,1)$. This is formalized by the hypotheses
\begin{align}
	\mathcal{H}_{\mathrm{ST}} & = \{P\in\mathcal{P}([0,1] ) \;|\; P \leST \mathrm{UNIF}(0,1)\}, \label{eq:stoch_smaller} \\
	\overline{\mathcal{H}}_{\mathrm{ST}} & = \{P\in\mathcal{P}([0,1] ) \;|\; P \geST \mathrm{UNIF}(0,1)\}, \label{eq:stoch_greater}
\end{align}
where $\mathcal{P}([0,1])$ is the set of all distributions on $[0,1]$. The quantile forecasts described in Section \ref{sec:preliminaries} give one motivation to test these hypotheses. More generally, for a random variable $Y$ and a strictly increasing CDF $G$, tests for $\mathcal{H}_{\textrm{ST}}$ or $\overline{\mathcal{H}}_{\mathrm{ST}}$ applied to $G(Y)$ allow to evaluate if the distribution of $Y$ is stochastically smaller or greater than $G$. Note that testing whether a random variable $Z \in [0,1]$ has distribution in $\mathcal{H}_{\mathrm{ST}}$ is equivalent to testing whether the distribution of $1-Z$ lies in $\overline{\mathcal{H}}_{\mathrm{ST}}$.

The null hypotheses $\mathcal{H}_{\mathrm{ST}}$ and $\overline{\mathcal{H}}_{\mathrm{ST}}$ are composite hypotheses, so the construction of e-values is more involved than for the continuous and discrete uniform distribution. The following result characterizes e-values for $\mathcal{H}_{\textrm{ST}}$ and $\overline{\mathcal{H}}_{\mathrm{ST}}$ under the additional restriction that they have expected value $1$ under the uniform distribution, i.e.~are Lebesgue densities.

\begin{prop} \label{prop:stochastic_test}
Let $f$ be a Lebesgue density on $[0,1]$. Then $\mathbb{E}_{P}(f(Z)) \leq 1$ for all $P \in \mathcal{H}_{\mathrm{ST}}$ ($P \in \overline{\mathcal{H}}_{\mathrm{ST}}$) if and only if there exists an increasing (decreasing) density $\tilde{f}$ and a Lebesgue null set $A$ such that $f(x) = \tilde{f}(x)$ for all $x \not\in A$ and $f(x) < \tilde{f}(x)$ for all $x \in A$.
\end{prop}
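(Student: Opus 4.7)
The plan is to handle the two cases in turn and then reduce $\overline{\mathcal{H}}_{\mathrm{ST}}$ to $\mathcal{H}_{\mathrm{ST}}$ by the reflection $z \mapsto 1 - z$.

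\emph{Sufficiency.} Suppose $f$ admits the stated decomposition with $\tilde f$ increasing. Because $\tilde f = f$ off a null set, $\int_0^1 \tilde f(u)\,du = 1$, and the pointwise domination gives $\mathbb{E}_P f(Z) \le \mathbb{E}_P \tilde f(Z)$ for every $P \in \mathcal{H}_{\mathrm{ST}}$. Standard stochastic order then yields $\mathbb{E}_P \tilde f(Z) \le \mathbb{E}_{\mathrm{UNIF}} \tilde f(Z) = 1$; I would justify this either by integration by parts, writing the difference as $\int_0^1 (F_P(z) - z)\,d\tilde f(z) \ge 0$ (the integrator is a nonnegative measure since $\tilde f$ is nondecreasing), or via a monotone coupling with $Z \le U$ almost surely.

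\emph{Necessity.} This is the main step. I would probe the hypothesis with the one-parameter family $P_{c,a,b}$ obtained from $\mathrm{UNIF}(0,1)$ by moving the Lebesgue mass of $[a,b]$ to a single atom at $c$, for $0 \le c \le a < b \le 1$. A direct CDF computation shows $F_{P_{c,a,b}}(z) \ge z$ on $[0,1]$, hence $P_{c,a,b} \in \mathcal{H}_{\mathrm{ST}}$, and $\mathbb{E}_{P_{c,a,b}} f(Z) \le 1 = \int_0^1 f$ rearranges to
\begin{equation*}
f(c) \le \frac{1}{b-a} \int_a^b f(y)\,dy, \qquad 0 \le c \le a < b \le 1.
\end{equation*}
Fixing $c$ and letting $b \downarrow a$ at right Lebesgue points $a > c$ gives $f(c) \le f(a)$ for almost every $a > c$. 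Now define $\tilde f(x) := \operatorname{ess\,inf}_{y > x} f(y)$, which is nondecreasing by construction. The previous inequality yields $\tilde f(x) \ge f(x)$ at every $x$. Conversely, at any Lebesgue point $x$ of $f$, the averages $\delta^{-1} \int_x^{x+\delta} f$ converge to $f(x)$, which forces $\operatorname{ess\,inf}_{y \in (x,x+\delta)} f(y) \le f(x) + \varepsilon$ for every $\varepsilon > 0$; hence $\tilde f = f$ almost everywhere. Therefore $\int \tilde f = 1$, so $\tilde f$ is a density, and $A := \{\tilde f > f\}$ is a Lebesgue null set on which $f < \tilde f$ while $f = \tilde f$ elsewhere.

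The decreasing case follows by applying the proven statement to $g(z) := f(1-z)$ together with the pushforward of $P$ under $z \mapsto 1 - z$, which lies in $\mathcal{H}_{\mathrm{ST}}$ iff $P \in \overline{\mathcal{H}}_{\mathrm{ST}}$, and then setting $\tilde f(z) := \tilde g(1-z)$. The main obstacle is the necessity direction: engineering the perturbations $P_{c,a,b}$ so as to extract exactly the averaged inequality displayed above, and then using Lebesgue differentiation to upgrade this essential-order condition to a \emph{pointwise} nondecreasing envelope $\tilde f$ that dominates $f$ everywhere and coincides with $f$ almost everywhere.
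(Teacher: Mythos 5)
Your proof is correct, and the necessity direction takes a genuinely different route from the paper's. The paper argues by contraposition and splits into two cases: if $f$ agrees a.e.\ with some increasing density $\tilde f$ but exceeds it at a point, an atom placed at that point produces $\E_P(f(Z))>1$; if no increasing a.e.-version exists, the antiderivative $F(x)=\int_0^x f$ fails to be convex, and a distribution that linearly compresses the mass of an interval $[x_1,x_3)$ onto $[x_1,x_2)$ violates the expectation bound. Your argument is direct and unified: the single family $P_{c,a,b}$ (with $c$ allowed to lie strictly left of $[a,b]$) yields the averaged inequality $f(c)\le (b-a)^{-1}\int_a^b f$, from which the monotone envelope $\tilde f(x)=\operatorname{ess\,inf}_{y>x}f(y)$ is dominated from below by $f$ nowhere and agrees with $f$ at every Lebesgue point, so the case distinction disappears and the increasing version is produced explicitly rather than assumed or refuted. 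What the paper's route buys is that each case needs only the crudest probe (one atom, or one linear stretch) and no measure-theoretic machinery beyond a.e.\ differentiability of convex functions; what yours buys is a constructive formula for $\tilde f$ and a cleaner logical structure, at the cost of invoking the Lebesgue differentiation theorem twice. Two cosmetic points to tidy up: at $x=1$ the essential infimum is over an empty set, so you should set $\tilde f(1)$ separately (e.g.\ $\tilde f(1)=\max(f(1),\lim_{x\uparrow 1}\tilde f(x))$, or allow the value $+\infty$ as the paper's remark does) so that $\tilde f$ is a genuine density; and in the reflection step reducing $\overline{\mathcal{H}}_{\mathrm{ST}}$ to $\mathcal{H}_{\mathrm{ST}}$ one needs the one-line check that $F_P(z)\le z$ for all $z$ implies $P(1-Z\le w)=1-F_P((1-w)-)\ge w$, which is immediate. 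Your sufficiency argument coincides with the paper's.
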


\begin{rem}
\citet[Section 2]{Vovk2021} call random variables $p \in [0,1]$ which satisfy $P(p \leq \alpha) \leq \alpha$ for all $\alpha \in (0,1)$ \emph{p-variables}, and a decreasing function $f: [0,1] \mapsto [0, \infty]$ a \emph{p-to-e calibrator} if $f(p)$ is an e-value for all p-variables $p$. In simple words, a p-to-e calibrator is a function which transforms p-values into e-values. This is closely related to the stochastic dominance hypotheses in this section. The set $\overline{\mathcal{H}}_{\mathrm{ST}}$ contains the distributions of all p-variables, and Proposition \ref{prop:stochastic_test} states that decreasing functions are indeed the only p-to-e-calibrators, when additionally expected value $1$ under the $\mathrm{UNIF}(0,1)$ distribution is imposed. For testing $\mathcal{H}_{\mathrm{ST}}$ ($\overline{\mathcal{H}}_{\mathrm{ST}}$), one should always take $f$ right continuous (left continuous) and the set $A$ empty, as otherwise $f(Z)$ is not an admissible e-value in the sense of \citet{Ramdas2020}, or $f$ not an admissible calibrator according to the definition of \citet{Vovk2021}. Notice that \citet{Vovk2021} would additionally require $f(1) = \infty$ for admissibility (or $f(0) = \infty$ in the case of $\overline{\mathcal{H}}_{\mathrm{ST}}$), but in practice one might want not to immediately reject the null hypothesis even though a single observation of $1$ is impossible under $\mathcal{H}_{\mathrm{ST}}$.
\end{rem}

For constructing e-values in a sequential setting, a suitable estimator for decreasing (or increasing) density functions is the Grenander estimator \citep{Grenander1956}, which is the maximum likelihood estimator among all decreasing density functions. The Grenander estimator produces piecewise constant density functions, and as a smooth alternative, we propose the estimator by \citet{Turnbull2014} based on mixtures of Bernstein polynomials, that is, beta densities. This estimator was originally proposed for the estimation of unimodal densities, but monotone densities can be easily accommodated by setting the mode to zero or one. Estimation is based on minimizing a squared distance between the ECDF of a sample $z_1, \dots, z_n$ under constraints on the mixture weights to ensure monotonicity. Different from the Grenander estimator, there is a tuning parameter, namely the maximum degree in the Bernstein polynomials, for which \citet{Turnbull2014} propose several selection criteria. Sequential updating of the estimator, the construction of lag $h$ e-values, and potential corrections to avoid e-values of zero can be done as described for the case of the $\mathcal{H}_{\mathrm{CUF}}$ hypothesis in Section \ref{subsec:continuous}.

The Grenander estimator has the additional advantage that it automatically adapts in the case when it is known (or cannot be excluded) that the distributions of interest have discrete support. To see this for the hypothesis $\mathcal{H}_{\mathrm{ST}}$, assume that the support is $0 = s_1 < \dots < s_k < s_{k + 1} = 1$; here $s_1 = 0$ and $s_k < 1$ are necessary conditions for $P \in \mathcal{H}_{\mathrm{ST}}$. If $f$ is an increasing density, then $\mathbb{E}_{P}(f(Z)) \leq 1$ by Proposition \ref{prop:stochastic_test}, but the piecewise constant density $g = g(z; f)$ defined by
\begin{equation} \label{eq:constant_approx_density}
	g(z; f) = \frac{\int_{s_i}^{s_{i + 1}} f(z) \diff z}{s_{i + 1} - s_i}, \, z \in [s_i, s_{i+ 1}), \, i = 1, \dots, k - 1, \ g(z; f) = \frac{\int_{s_i}^{s_{i + 1}} f(z) \diff z}{1 - s_k}, \, z \geq s_k,
\end{equation}
is also increasing and satisfies $g(s_i; f) \geq f(s_i)$, $i = 1, \dots, k$, so $g(\cdot; f)$ yields a more powerful e-value than $f$. If $f$ is computed with the Grenander estimator and all observations are in $\{s_1, \dots, s_k\}$, then $f$ is already piecewise constant on the intervals $[s_i, s_{i + 1})$, and therefore $g(z; f) = f(z)$. The density \eqref{eq:constant_approx_density} can also be interpreted as the likelihood ratio between the probabilities $g_i = \int_{s_i}^{s_{i + 1}} f(z) \diff z$ and the discretization of the uniform distribution which puts mass $p_i = s_{i + 1} - s_i$ on the points $s_i$.

\begin{rem}
Assume that we are interested in the hypothesis 
\begin{equation}
	\mathcal{H}= \big\{\boldsymbol{P} \in \mathcal{P}\big([0,1]\times [0,1]\big) \mid P_1 \leST \textrm{UNIF}(0,1) \leST P_2\big\},
\end{equation}
where $\mathcal{P}=\mathcal{P}([0,1]\times [0,1])$ denotes the set of all bivariate distributions on $[0,1]\times [0,1]$ and $P_1, P_2$ denote the marginal distributions of some $\boldsymbol{P} \in \mathcal{P}$. Then
\[
	\mathcal{H} = \big\{\boldsymbol{P} \in \mathcal{P} \mid P_1 \leST \mathrm{UNIF}(0,1) \big\} \cap \big\{\boldsymbol{P} \in \mathcal{P} \mid  \mathrm{UNIF}(0,1) \leST P_2\big\}
		= \mathcal{H}_{\mathrm{ST};1} \cap  \overline{\mathcal{H}}_{\mathrm{ST};2}.
\]
Since we can write $\mathcal{H}$ as an intersection of two hypotheses it follows immediately that $(E_1 + E_2)/2$ is an e-value for $\mathcal{H}$ if $E_1, E_2$ are e-values for $\mathcal{H}_{\mathrm{ST};1},\overline{\mathcal{H}}_{\mathrm{ST};2}$ respectively. E-values for $\mathcal{H}_{\mathrm{ST};1}, \overline{\mathcal{H}}_{\mathrm{ST};2}$ can be constructed with the methods proposed in this section, since the hypotheses only impose restrictions on one of the marginals.
\end{rem}
\begin{exmp} \label{exmp:e_quantile_pit}
In this example we show how to use the e-values of Proposition \ref{prop:stochastic_test} and the above remark to check probabilistic calibration of quantile forecasts as defined in Section \ref{sec:preliminaries}. Assume that for given quantile levels $0 < \alpha_1 < \dots < \alpha_K < 1$ we sequentially predict quantiles $(q_{1;t}, \dots, q_{K;t})_{t \in \mathbb{N}}$ at lag 1 and observe the quantities $(y_t)_{t \in \mathbb{N}}$. We calculate the sequence of upper quantile PITs $(z_t)_{t \in \mathbb{N}}$ and lower quantile PITs $ (\overline{z}_t)_{t \in \mathbb{N}} \subseteq [0,1]$, where
\[
		z_t= Z_{F_{u;t}}(y_{t+1}) \quad \textrm{and}\quad \overline{z}_t = Z_{F_{\ell;t}}(y_{t+1}).
\]
For $t \geq 1$ and upper quantile PIT values $z_1, \dots, z_t$ we estimate an increasing density $f_t$. Analogously, we estimate a decreasing density $\overline{f}_t$ with the lower quantile PIT $\overline{z}_1, \dots, \overline{z}_t$. By Proposition \ref{prop:stochastic_test}, $E_{t+1}=f_t(z_{t+1})$ is an e-value for $\mathcal{H}_{\textrm{ST;1}}$ and $\overline{E}_{t+1}=\overline{f}_t(z_{t+1})$ is an e-value for $\overline{\mathcal{H}}_{\mathrm{ST};2}$. Sequential e-values for probabilistic calibration of the quantile forecasts are obtained by $\bar{E}_t = (E_t + \overline{E}_t) / 2$, as explained in the above remark. For for $h > 1$, we refer to the usual procedure where we have to estimate densities separately on subsamples with indices $\{k + hs \mid s = 0, 1, \dots\}$ for $k = 1, \dots, h$.
\end{exmp}

\section{Simulation study} \label{sec:simulation_study}
To evaluate the power of the e-values, we generate independent observations $Y \sim \mathcal{N}(0,1)$ and define forecasts $F = \mathcal{N}(\varepsilon, 1 + \delta)$, where $\epsilon, \delta \in \{-0.5, -0.4, \dots, 0.5\}$ are the bias and dispersion error, respectively. Figure \ref{fig:pit_illustration} illustrates the distribution of the PIT $Z_F(Y) = F(Y)$ for different combinations of bias and dispersion error. For $\delta = \varepsilon = 0$ the PIT is uniformly distributed. To obtain comparable simulations for testing the discrete uniform distribution, we generate $20$ independent ensemble forecasts $\boldsymbol{X} = (X_1, \dots, X_m)$ according to $F$, and test for uniformity of $\mathrm{rank}_{\boldsymbol{X}}(Y) \in \{1, \dots, 21\}$. The tests for stochastic order are applied to the PIT $Z_F(Y)$, and we only test if the distribution of $Z_{F}(Y)$ is stochastically greater than $\mathrm{UNIF}(0,1)$. For testing calibration of quantile forecasts, we take $K = 19$ equispaced quantiles (levels $0.05, 0.1, \dots, 0.95$) of the distribution $F$ and compute the e-values as described in Example \ref{exmp:e_quantile_pit}. Since both $F$ and the distribution of $Y$ are absolutely continuous, the lower and upper quantile PITs are discrete in this case with values in $\{0.05, 0.1, \dots, 1\}$ and $\{0, 0.05, \dots, 0.95\}$, respectively.

We display the result of our simulation experiments with test power heat matrices; see Figure \ref{fig:simulations_power} and the additional figures in the Supplementary Material. While this graphical display is self-explanatory, we emphasize that it allows to compare test power across several tests with respect to two directions of alternatives at a single glance. Figure \ref{fig:simulations_power} shows the rejection rates of different tests in the simulation examples at a level of $\alpha = 0.05$ with a sample size of $n = 360$. All e-values apply the stopping criterion $\tau = \min(360, \, \inf\{t \geq 1: e_t \geq 1/\alpha\})$, and we refer to Appendix \ref{app:implementation} for implementation details. The results for different values of $\alpha$ and $n$ are qualitatively similar and presented in the Supplementary Material. For the continuous uniform distribution, we compare the beta e-values and the kernel e-values to the Kolmogorov-Smirnov test (abbreviated {\tt ks.test} in the following).\footnote{The quantile PIT has a discrete distribution in this simulation study, but the {\tt ks.test} as implemented in \textsf{R} is still applicable since it applies an asymptotic distribution for the test statistic which is sufficiently precise for the sample sizes considered here. We refer the reader to the detailed description and references in the \textsf{R} documentation of {\tt ks.test}.} While the {\tt ks.test} has a higher power against biased forecasts, it is less sensitive to dispersion errors than both e-values. The beta e-values generally achieve a higher power than the e-values based on kernel density estimation, but this difference becomes smaller for larger sample sizes; see the Supplementary Material. For the discrete uniform distribution, we take the chisquare test for comparison. The e-values based on the betabinomial distribution are most sensitive to violations of uniformity, whereas constructing e-values with the empirical frequencies of the ranks is not powerful for the given simulation, since the empirical distribution only recovers the shape of the underlying distribution very slowly. For testing the null hypothesis that the PIT is stochastically greater than $\mathrm{UNIF}(0,1)$, we apply a one sided version of the {\tt ks.test}, which turns out to be more powerful than the e-values. Nevertheless, the e-values with Bernstein polynomials achieve a similar power when the forecast is underdispersed. For testing calibration of the quantile forecasts, one-sided {\tt ks.test}s are applied to the upper and lower quantile PIT and corrected with the Bonferroni method, so that probabilistic calibration can be rejected if at least one of the corrected p-values is below $0.05$. The e-values based on the Grenander estimator are more sensitive to forecast dispersion errors than the {\tt ks.test}, but less sensitive to the bias. The Bernstein e-values achieve a lower power, which is due to the fact that they do not automatically adapt to the discreteness of the quantile PIT.

To summarize, in all simulations the e-values are able to achieve similar power as established methods when optional stopping is applied. Without optional stopping, i.e.~when only considering the e-value at the end of the observation period instead of the anytime-valid p-value $p_t = (\max_{i=1, \dots, t}e_i)^{-1}$, the rejection rates are lower than for the classical non-sequential tests. For the discrete uniform distribution, we suggest to use the betabinomial e-values unless the sample size is large or the number of distinct values $m$ is small. In stochastic dominance testing with smooth distributions, it is generally better to apply the Bernstein e-values. The Grenander estimator should be preferred for testing calibration of quantile forecasts when both the underlying forecast distribution and the distribution of the outcome are continuous. 

\begin{figure}
\centering
\includegraphics[width = \textwidth]{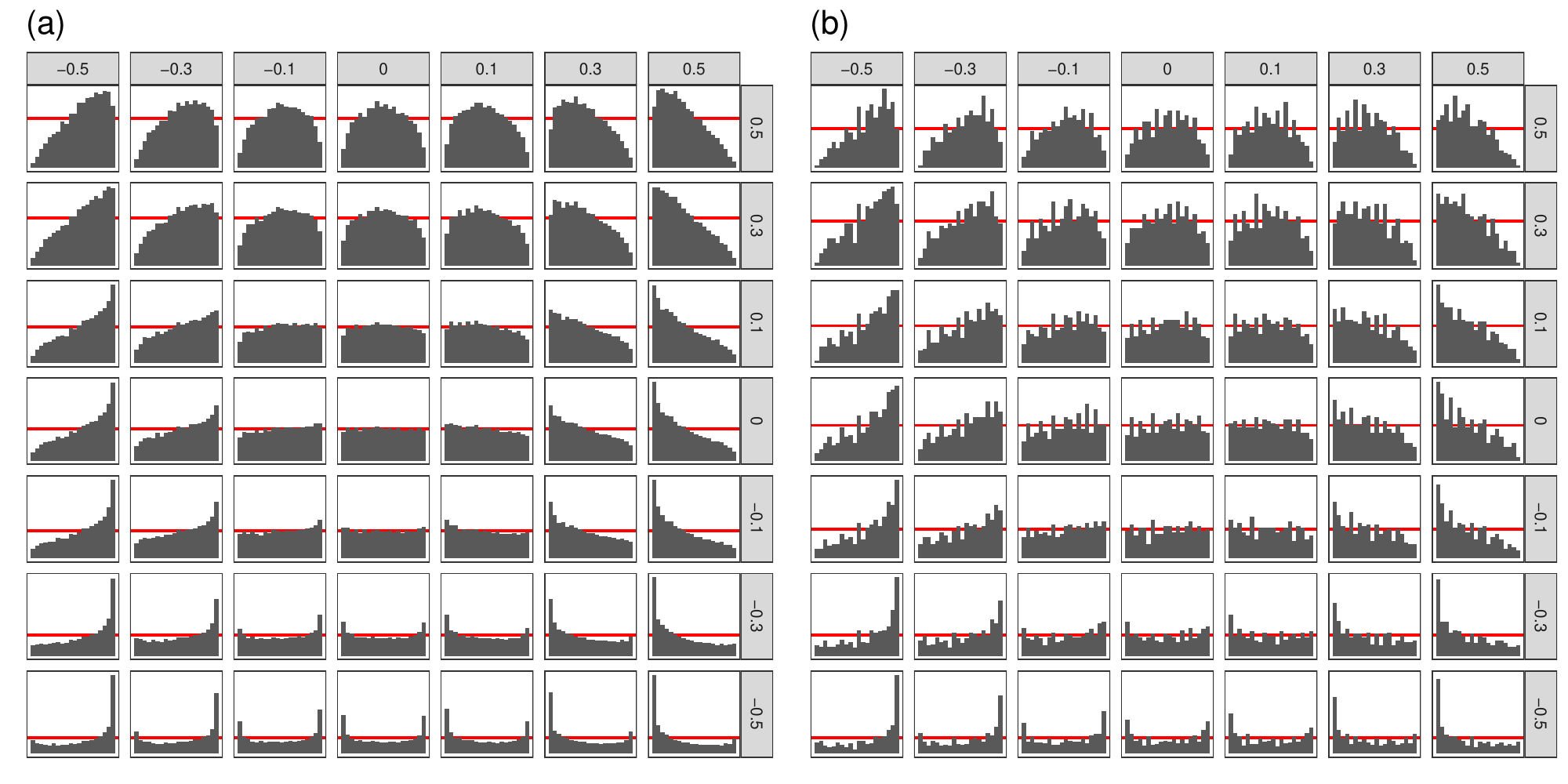}
\caption{Histograms of the PIT in the simulation study, with $20$ equispaced bins and a sample size of (a) $n = 10'000$ (theoretical appearance of the underlying distribution), and (b) $n = 360$ (PIT histogram in a typical simulation). The rows in the figure panels give the dispersion error $\delta$, and the columns give the bias $\varepsilon$. The horizontal line shows the uniform density. Note the different scaling of the y-axis in the panel rows.  \label{fig:pit_illustration}}
\end{figure}

\begin{figure}
\centering
\includegraphics[width = 0.9\textwidth]{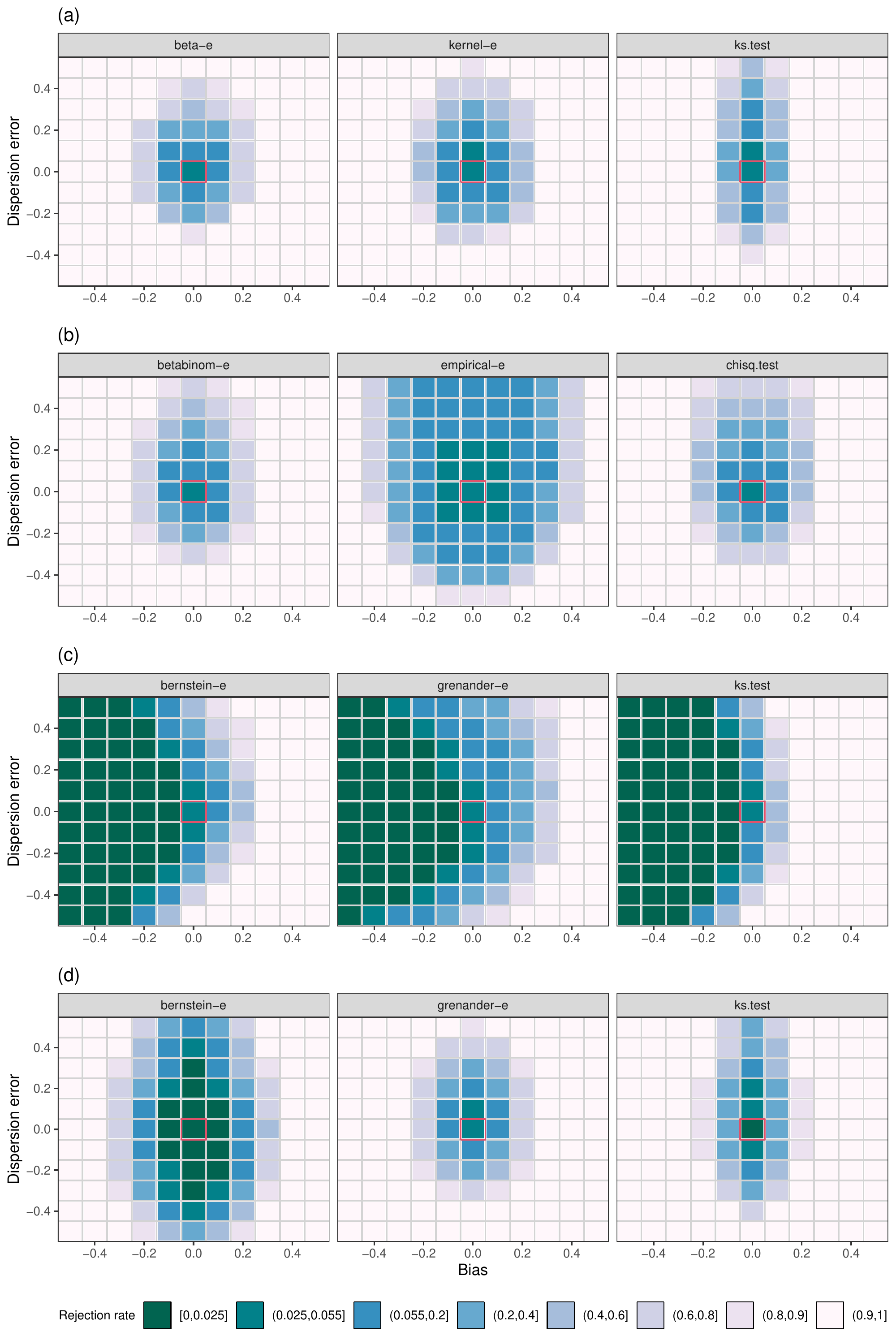}
\caption{Rejection rates of different tests for (a) the continuous uniform distribution (b) the discrete uniform distribution (c) stochastic dominance (d) calibration of quantile forecasts, at the level $\alpha = 0.05$ with a sample size of $n = 360$, depending on the bias and dispersion error. The central box highlights the rejection rates for bias and dispersion error equal to zero. Rejection rates are computed over $5000$ simulations. \label{fig:simulations_power}}
\end{figure}

\section{Case study} \label{sec:case_study}

\subsection{Data and methods}
\begin{table}
\caption{Meteorological station information (latitude, longitude, World Meteorological Organization (WMO) station identifier, station name). \label{tab:station_info}}
\centering
\bigskip
\resizebox{\textwidth}{!}{%
\begin{tabular}{rrrlrrrl}
\toprule
Latitude & Longitude & WMO ID & Name & Latitude & Longitude & WMO ID & Name\\
\midrule
54.18 & 7.90 & 10015 & Helgoland & 51.13 & 13.75 & 10488 & Dresden-Klotzsche\\
53.63 & 9.98 & 10147 & Hamburg-Fuhlsbüttel & 50.87 & 7.17 & 10513 & Köln-Bonn\\
53.65 & 11.38 & 10162 & Schwerin & 50.98 & 10.97 & 10554 & Erfurt-Weimar\\
53.05 & 8.80 & 10224 & Bremen & 49.75 & 6.67 & 10609 & Trier-Petrisberg\\
52.47 & 9.68 & 10338 & Hannover & 50.05 & 8.60 & 10637 & Frankfurt/Main\\
52.13 & 11.60 & 10361 & Magdeburg & 49.77 & 9.97 & 10655 & Würzburg\\
52.38 & 13.07 & 10379 & Potsdam & 49.52 & 8.55 & 10729 & Mannheim\\
52.57 & 13.32 & 10382 & Berlin-Tegel & 48.68 & 9.23 & 10738 & Stuttgart-Echterdingen\\
51.30 & 6.77 & 10400 & Düsseldorf & 49.50 & 11.05 & 10763 & Nürnberg\\
51.50 & 9.95 & 10444 & Göttingen & 49.05 & 12.10 & 10776 & Regensburg\\
51.42 & 12.23 & 10469 & Leipzig/Halle & 48.43 & 10.93 & 10852 & Augsburg\\
\bottomrule
\end{tabular}
}
\end{table}

Ensemble prediction systems have tremendously improved the precision of weather forecasts in the past decades \citep{Bauer2015}. However, it is well known that ensemble forecasts remain subject to biases and dispersion errors, which require statistical correction, so called postprocessing, and a variety of methods is available for this task and applied by weather forecasters \citep{Vannitsem2018}. Ensemble postprocessing methods try to estimate the conditional distribution of the variable of interest given the ensemble forecasts. Postprocessed forecasts usually achieve a better calibration than the raw ensemble forecasts, but they may still be miscalibrated if the relationship between forecasts and observations changes over time or if the postprocessing method (say, a parametric model), is not appropriate for the variable at hand. The PIT is one important tool for identifying misspecification of postprocessed forecasts.

In this case study we apply the e-values to test calibration of postprocessed weather forecasts for 22 SYNOP weather stations in Germany. The dataset is part of the data analysed by \citet{Hemri2014} and was kindly provided by Sebastian Lerch. Forecast data are available through the European Centre for Medium-Range Weather Forecasts (ECMWF) Meteorological Archival and  Retrieval  System  (\url{https://www.ecmwf.int/en/forecasts}) and via TIGGE \citep{Bougeault2010, Swinbank2016}. Station observations can be downloaded from NOAA's Integrated Surface Database (\url{https://www.ncdc.noaa.gov/isd}). Station information is given in Table \ref{tab:station_info}. We postprocess the ensemble predictions from the ECMWF, which consists of 50 perturbed forecasts \citep{Molteni1996, Buizza2005}. The variables considered are 2 meter temperature, wind gust speed, and accumulated precipitation, for lead times of 24, 48, and 72 hours. Data is available from January 1, 2002, to March 20, 2014, and all data until and including the year 2008 is used for training the postprocessing models and the remaining part for validation. The validation dataset consists of $1855$ to $1896$ days per station, slightly varying due to different numbers of missing values.

Postprocessing is performed separately for each forecast lag and for seasons, namely, the model parameters are estimated on data from the calendar months April to September and October to March for forecasts within the respective periods. The postprocessing for all variables is based on the Ensemble Model Output Statistics (EMOS) approach with heteroscedastic regression: The conditional distribution of the variable of interest is approximated by a parametric location-scale family, with the location parameter being an affine function of the ensemble mean and the scale parameter being the exponential of an affine transformation of the ensemble standard deviation. For temperature forecasts, the parametric family are Gaussian distributions. Wind gust speed is modelled with the density of a logistic distribution truncated at zero and rescaled so that it integrates to one. Forecasts for accumulated precipitation are based on the censored logistic distribution, where the probability mass on the non-positive numbers gives the probability of zero precipitation. Parameters are estimated by maximum likelihood for the temperature and wind speed forecasts. For precipitation forecasts, parameters are estimated by minimizing the continuous ranked probability score (CRPS) for precipitation, or by maximum likelihood in case the minimization of the CRPS criterion did not converge. The implementation is in \textsf{R} with the {\tt crch} package \citep{Messner2016}.

To evaluate probabilistic calibration we apply the e-values based on kernel density estimation. To make full use of the large sample size, we use the data of the first year in the validation (more precisely, the first 366 days) only for the computation of a reliable first guess of the density of the PIT, and set all e-values for this period to $1$. For lag 2 and lag 3 forecasts, this gives sample sizes of 183 or 122, respectively, for each of the lagged sequences of e-values. Apart from this modification, the implementation is as described in Appendix \ref{app:kernel}. The e-values based on beta distributions are less powerful to detect misspecification than the kernel densities because the shape of the PIT distribution is sometimes more complicated than only unimodal or U-shaped, which often stems from overlaps of different types of misspecifications in the validation period. We also applied the e-values for the discrete uniform distribution on the raw ensembles, which lead to very fast rejection of the null hypothesis and extremely high e-values (see Table 1 in the Supplementary Material). 

\subsection{Results}
Panels (a) and (b) of Figures \ref{fig:station_10015}, \ref{fig:station_10162} and \ref{fig:station_10729} display the PIT histograms and e-values for selected stations, with the common choice of $20$ bins for plotting the histograms. For many stations, the PIT histograms indicate severe deviations from uniformity, and the e-values give decisive evidence against the null hypothesis of probabilistic calibration. For higher lags, where e-values cannot be merged by product, the power is generally lower than for lag 1. If the goal is purely to check whether the violation of calibration is significant, then Figure \ref{fig:distance_e_value} demonstrates that the e-values indeed correlate well with the distance of the PIT from the uniform density. 

\begin{figure}
\centering
\includegraphics[width = \textwidth]{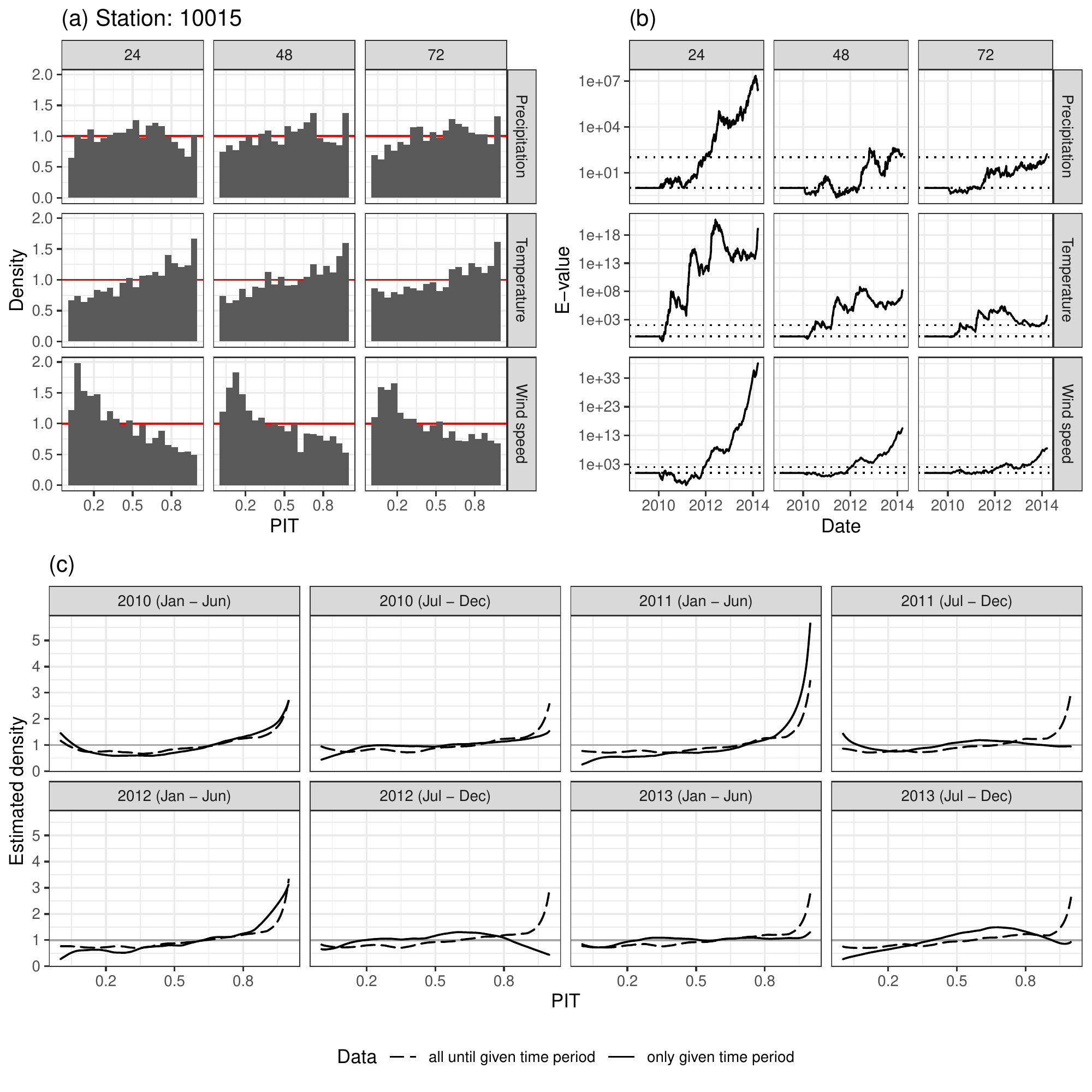}
\caption{(a) PIT histograms of forecasts for station with ID 10015, for all variables and lead times. (b) E-values $(e_t)$ testing uniformity of the PIT of the given forecasts, where the dotted horizontal show the levels $1$ and $100$. (c) Density estimates of the PIT for 24 hour temperature forecasts and the given time periods. The same density estimation method is used as for the computation of the e-values. The dashed density is based on all data until (but not including) the period indicated in the caption, and the solid lines represent the density of the PIT only within the given period. \label{fig:station_10015}}
\end{figure}

As argued in the introduction, evaluating probabilistic calibration only at the end of an observation period is often not informative since forecast misspecification changes over time, and this change of forecast misspecification can indeed be seen in the e-values. Consider first the 24 hour temperature forecasts for station 10015, Helgoland (Figure \ref{fig:station_10015}). The forecasts are biased, with temperatures often being higher than expected under the forecast distribution. Interestingly, the cumulative product of the e-values displayed in panel (b) of Figure \ref{fig:station_10015} exhibits a clear seasonal pattern: Evidence against calibration is usually gained in the first half of each calender year, but not in the second half. To further investigate this effect, we plot the kernel density estimates of the PIT (with the same method as used for constructing the e-values) separated by time periods. Panel (c) of Figure \ref{fig:station_10015} shows for each half year the density of the PIT based on data until (but not including) the given period. For lag 1 forecasts, this is the e-value $E^{K, b, \boldsymbol{\zeta}^t}$, where $\boldsymbol{\zeta}^t$ are all PIT values before the period and $b$ is the bandwidth estimated with data $\boldsymbol{\zeta}^t$. The second density function is estimated based on data within the given time period. For example, the solid line in the second plot in Figure \ref{fig:station_10015} (c) uses PIT values from 2009 until the end of June 2010, and the dashed density is based the PIT from July until December 2010. If the two densities exhibit similar deviations from uniformity, then evidence against the null hypothesis of calibration is gained, since the observed PIT lies in regions where the e-value is greater than one.

It can be seen that the bias of the forecast indeed only occurs in the months January to June, where the e-value increases, but the forecasts are relatively well calibrated from July to December. Improving the postprocessing method should therefore take into account that there is a different seasonal behaviour of the forecasts and observations, which is not captured by performing separate parameter estimation for the months April to September and October to March, and this seasonal behaviour is directly visible in the e-values in panel (b).

\begin{figure}
\centering
\includegraphics[width = \textwidth]{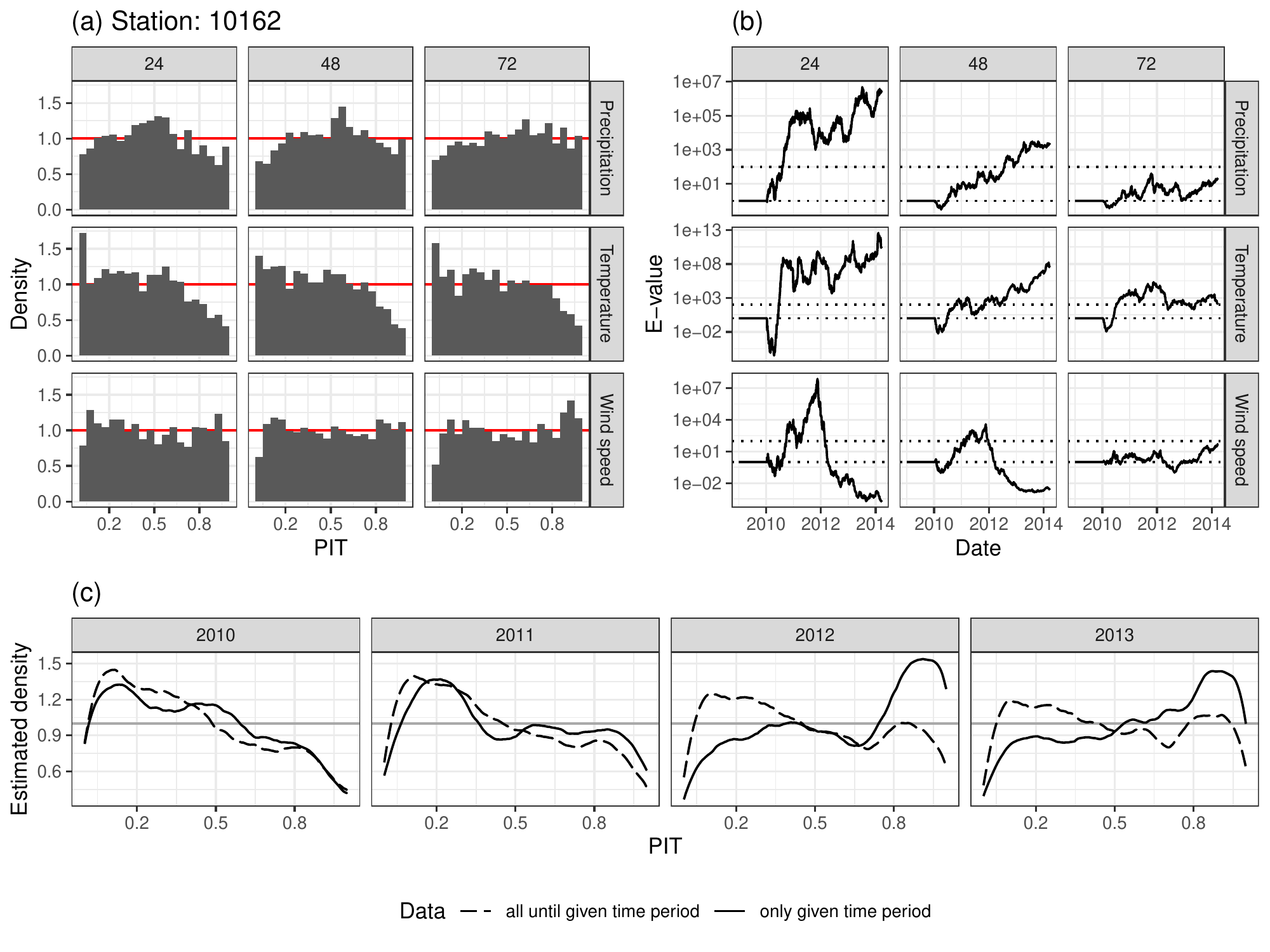}
\caption{Calibration checks for station 10162. The plots are as described in Figure \ref{fig:station_10015}, with panel (c) referring to the 24 hour wind speed forecasts. \label{fig:station_10162}}
\end{figure}

A similar observation can be made for the 24 hour wind speed forecasts for station 10162, Schwerin, in Figure \ref{fig:station_10162}. The PIT histogram looks close to uniform, and the e-value at the end of the observation period is close to zero and therefore suggests that the forecast is calibrated. However, when looking at the full time domain, there is in fact strong evidence against probabilistic calibration: At the end of the year 2011 the e-value reaches a level of more than $10^{7}$, which corresponds to a highly significant p-value of $10^{-7}$. Rejecting calibration based on this observation is statistically valid, because the probability that the process exceeds this level at any time is less or equal to $10^{-7}$. The density estimates, in the same spirit as for the previous station, show that the forecasts are in fact biased over the whole observation period, but the direction of the bias changes at the end of 2011. This change in forecast misspecification is clearly visible in the plot of the e-value over time. The e-value only has power to detect deviations from the null hypothesis which are consistent with the chosen alternative, and hence a decreasing e-value does not necessarily indicate that the forecast is calibrated, but rather that the chosen alternative does not adequately describe the distribution of the PIT. If, like in this application, the density under the alternative is estimated solely on past data, then this typically indicates a change in miscalibration to a misspecification type which has not been observed or was not predominant in the past.

Finally we consider the 48 hour precipitation forecasts for station 10729, Mannheim (Figure \ref{fig:station_10729}). The e-value grows steadily over time and reaches a level of $10^5$, indicating that the underdispersion visible in the PIT is indeed significant. The kernel density estimates in panel (c) of Figure \ref{fig:station_10729} confirm that this underdispersion is consistent over the whole time period and not varying, as one could expect from the plot of the e-values.

To summarize, by examining how e-values develop over time, changes in forecast calibration or miscalibration become visible at a glance. Furthermore, e-values make it is possible to detect forecast miscalibration which cannot be seen directly in a PIT histogram based on the complete data, and yield valid p-values for rejecting calibration at any time point without having to stratify the data in advance. A stratified analysis by season or year, as in the panel (c) of the figures in this section, does of course not necessarily require e-values. However, it has been demonstrated that e-values may simplify this process by indicating whether or at what time points forecast misspecification changes.

\begin{figure}
\centering
\includegraphics[width = \textwidth]{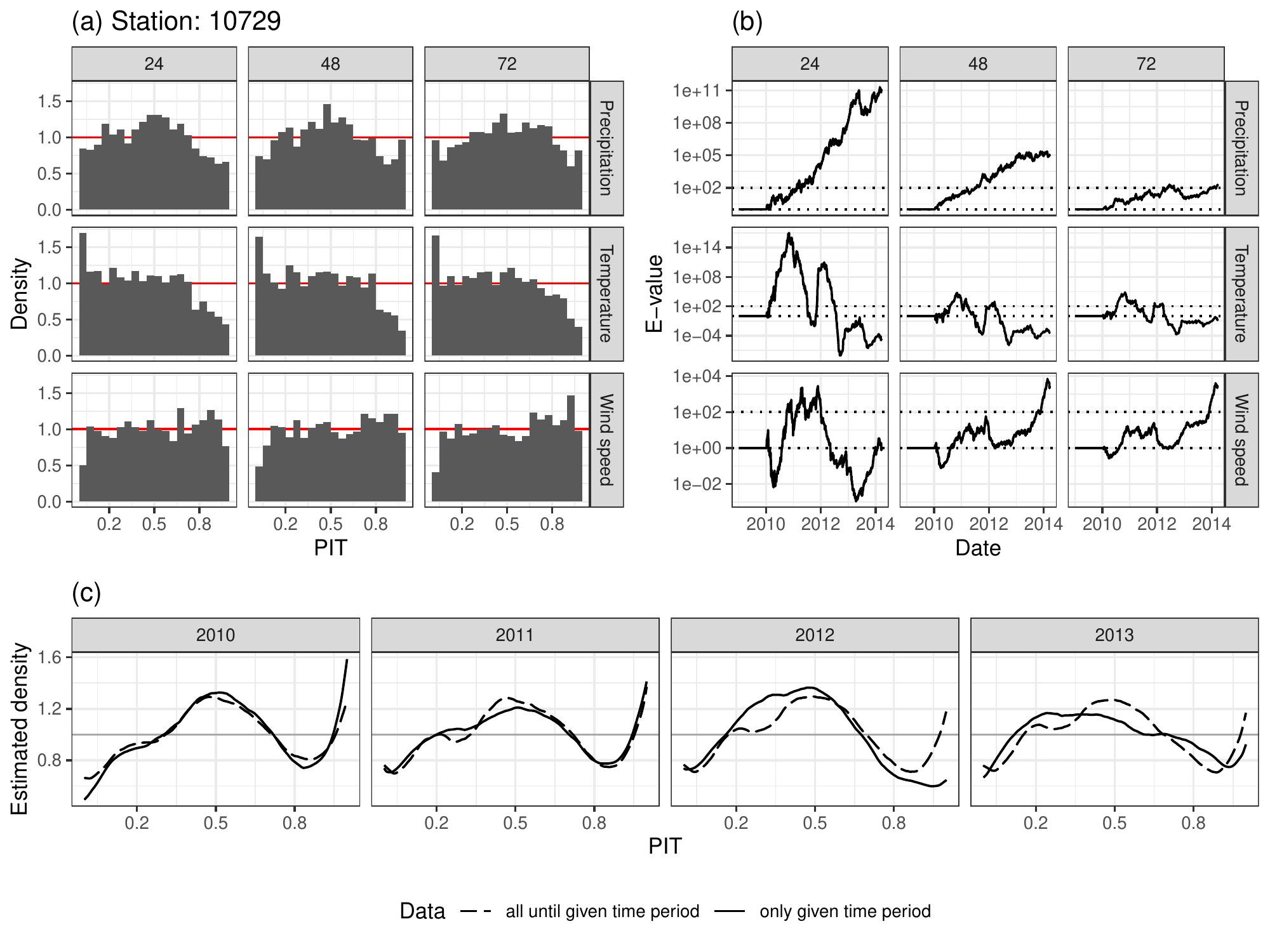}
\caption{Calibration checks for station 10729. The plots are as described in Figure \ref{fig:station_10015}, with panel (c) referring to the 48 hour precipitation forecasts.  \label{fig:station_10729}}
\end{figure}

\begin{figure}
\centering
\includegraphics[width = \textwidth]{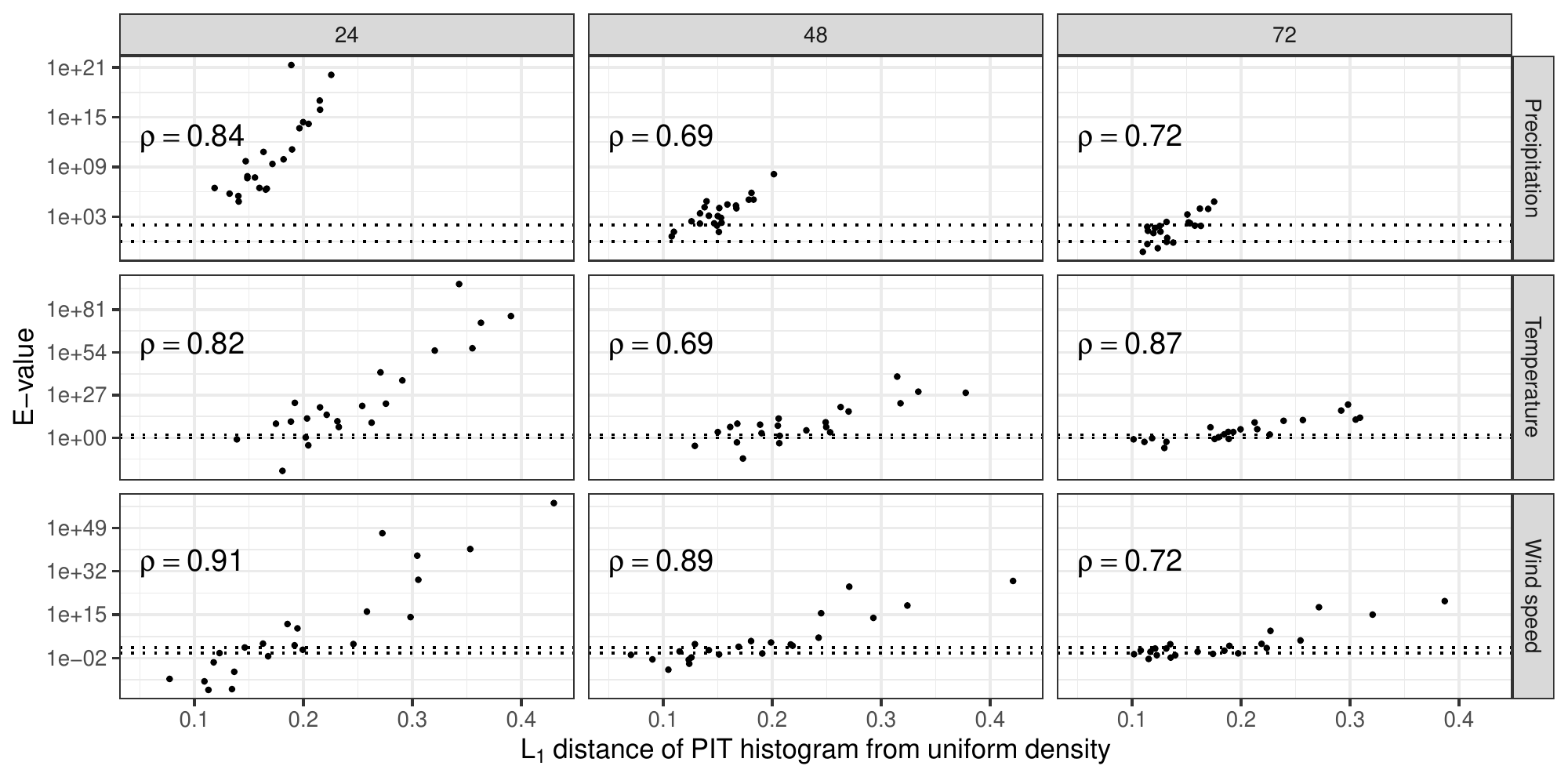}
	\caption{E-values for each station compared to the integrated absolute difference ($\mathrm{L}_1$ distance) between the PIT histogram and the uniform density. The e-values are the ones obtained at the end of the observation period, and $\rho$ gives the Spearman rank correlation between the e-values and the $\mathrm{L}_1$ distance for the given station and lead time. The dotted horizontal lines show the nominal levels of $1$ and $100$. \label{fig:distance_e_value}}
\end{figure}
\section{Discussion}
Forecasting is an inherently sequential task. Most forecasts exhibit non-stationary errors, for example due to seasonal effects, and forecasters adapt and improve their methods and models over time, which results in systematic changes of forecast performance. For this reason forecast evaluation should be sequential as well. Indeed, most practitioners and institutions continuously evaluate the quality of their forecasts; for example, the EMCWF analyses their forecast methods annually in their reports available on \url{https://www.ecmwf.int/en/publications/annual-reports}. From the theoretical side, there is a lack of methods tailored for sequential forecast evaluation, which do not simply rely on a discretization of the time domain and applying static methods for fixed sample sizes.

E-values, which are arguably the suitable tool for sequential forecast evaluation, have received increasing interest in recent years, but so far the methods have not been systematically applied to the evaluation of probabilistic forecasts. We have shown how e-values can be applied to obtain sequentially valid tests for probabilistic calibration, which is one of the most important notions of forecast calibration. The e-values which are provided in this paper are also of stand-alone interest and can be applied in other areas of statistics.

Simulation studies are an important tool to understand rejection rates (power) of newly proposed tests across a range of relevant alternatives. Often, if several parameters are varied in the study, readers are overwhelmed by too many numbers in large tables or too many lines in graphs. We suggest to display summaries of rejection rates as test power heat matrices as given in Figure \ref{fig:simulations_power}. These diagrams allow to see a power comparison of several tests against many alternatives at one glance. 

Our paper focuses on probabilistic calibration. A topic for future work is to derive valid tests for sequential forecast evaluation for other notions of calibration like auto-calibration. In contrast to probabilistic calibration, the notion of auto-calibration extends readily also to multivariate forecasts.

\section*{Acknowledgements}
The authors are grateful to Sebastian Lerch for providing data for the case study and thank Timo Dimitriadis, Tilmann Gneiting and the members of his group for valuable discussions and inputs. Valuable comments by Aaditya Ramdas and an anonymous reviewer helped us to improve this article. This work was supported by the Swiss National Science Foundation. Computations have been performed on UBELIX (\url{https://ubelix.unibe.ch/}), the HPC cluster of the University of Bern.

\bibliographystyle{plainnat}
\bibliography{biblio}

\appendix
\section{Proofs of theoretical results} \label{app:proofs}
\begin{proof}[Proof of Proposition \ref{prop:qpit}]
For $y < q_1$ and $z>q_K$ the conditions $F_u(y) = 0 \le F(y)$ and $F(z)\leq 1= F_\ell(z)$ are always satisfied. If $y \in (q_i,q_{i+1})$ for some $i=1,\dots,K-1$, then
\[
F_u(y) = \alpha_i \le F(q_i) \le F(y) \le F(q_{i+1}-) \le \alpha_{i+1} = F_\ell(y).
\]
For the second claim consider $j \in \{1, \dots, K\}$ and define $\mathcal{I}_j = \{i \in \{1, \dots, K\}\mid q_i = q_j\}$. Then, 
\begin{align*}
& F_u(q_j-)= \min_{i \in \mathcal{I}_j} \, \alpha_{i-1} \leq \max_{i \in \mathcal{I}_j} \, \alpha_i = F_u(q_j), \\
& F_\ell(q_j-)= \min_{i \in \mathcal{I}_j}\,\alpha_{i} \leq \max_{i \in \mathcal{I}_j}\,\alpha_{i+1} = F_\ell(q_j),
\end{align*}
which shows equation \eqref{eq:quantiles}.
\end{proof}

\begin{proof}[Proof of Proposition \ref{prop:stochastic_test}]
We show the claim for $\mathcal{H}_{\textrm{ST}}$. The arguments for $\overline{\mathcal{H}}_{\mathrm{ST}}$ are analogous.

\noindent
Sufficiency: Assume that there exists an increasing function $\tilde{f}$ and a Lebesgue null set $A$ such that $f(x) = \tilde{f}(x)$ for all $x \in [0,1] \setminus A$ and $\tilde{f}(x) > f(x)$ for $x \in A$. Let $P \in \mathcal{H}_{\textrm{ST}}$, then $P \leST \mathrm{UNIF}(0,1)$ and
\begin{equation}
	\E_{P} (f(Z)) \leq \E_{P} (\tilde{f}(Z)) \leq \E_{\textrm{UNIF}(0,1)}(\tilde{f}(Z)) = 1,
\end{equation}
using that $f(x) \leq \tilde{f}(x)$ for all $x \in [0,1]$, isotonicity of $\tilde{f}$, and the fact that $\mathbb{E}_{F_1}(g(X)) \leq \mathbb{E}_{F_2}(g(X))$ for all increasing functions $g$ if $F_1 \leST F_2$.

\noindent
Necessity: Let $f$ be a density on $[0,1]$ such that there exist no Lebesgue null set $A$ and increasing Lebesgue density $\tilde{f}$ such that $f(x) = \tilde{f}(x)$ for $x \not\in A$ and $f(x) < \tilde{f}(x)$ for $x \in A$. We show that then $\mathbb{E}_{P}(f(Z)) > 1$ for some $P \in \mathcal{H}_{\mathrm{ST}}$.

Case 1: There is an increasing Lebesgue density $\tilde{f}$ such that $f(x) = \tilde{f}(x)$ for all $x \not\in A$, where $A$ is a Lebesgue null set. Then there must exist $a \in [0,1]$ such that
\begin{equation} \label{eq:stoch_violated}
	f(a) > \tilde{f}(a).
\end{equation}
If \eqref{eq:stoch_violated} only holds for $a = 1$, then $f(x) \leq \tilde{f}(x)$ for $x \neq 1$ and $f(1) > \tilde{f}(1)$. This yields a contradiction, because with the isotonic function $\check{f}$ defined as $\check{f}(x) = \tilde{f}(x)$ for $x < 1$ and $\check{f}(1) = f(1)$, we have that $f(x) = \check{f}(x)$ for all $x \in ([0,1]\setminus A) \cup \{1\}$, and $f(x) < \check{f}(x)$ for $x \in A \setminus \{1\}$. Hence we can assume that \eqref{eq:stoch_violated} holds for some $a < 1$. If $\tilde{f}(x) \geq f(a)$ for all $x > a$ and for all $a$ such that \eqref{eq:stoch_violated} holds, then similar to before, define $\check{f}(x) = \tilde{f}(x)$ for $x \neq a$ and $\check{f}(a) = f(a)$ for all $a$ for which \eqref{eq:stoch_violated} is true. Then $\check{f}$ is again an increasing function almost surely equal to $f$ and satisfies $\check{f} \geq f$, a contradiction (a figure illustrating this special case can be found in the Supplementary Material). Therefore there must exist $a \in [0,1)$ such that $f(a) > \tilde{f}(b)$ for some $b \in (a, 1]$. This implies $f(a) > \tilde{f}(y)$ for all $y \in [a, b]$, by monotonicity of $\tilde{f}$. Choose $a$, $b$ such that this condition holds, and define the CDF $G$ by
\[
	G(x) = \begin{cases}
		x, & \ x \in [0, a), \\
		b, & \ x \in [a,b), \\
		x, & \ x \in [b, 1].
	\end{cases}
\]
Then $G(x) \geq x$ for $x \in [0,1]$, so $G \in \mathcal{H}_{\mathrm{ST}}$, and
\begin{align*}
	\mathbb{E}_{G}(f(Z)) & = \int_{[0, a)} f(z) \diff z + (b-a) f(a) + \int_{[b, 1]} f(z) \diff z \\
	& = \int_{[0, a)} \tilde{f}(z) \diff z + (b-a) f(a) + \int_{[b, 1]} \tilde{f}(z) \diff z \\
	& > \int_{[0, a)} \tilde{f}(z) \diff z + \int_{[a,b)} \tilde{f}(z) \diff z + \int_{[b, 1]} \tilde{f}(z) \diff z = 1,
\end{align*}
using the fact that $f = \tilde{f}$ Lebesgue almost surely.

Case 2: There exists no monotone increasing Lebesgue density $\tilde{f}$ such that $f(x) = \tilde{f}(x)$ for all $x \in [0,1] \setminus A$, where $A$ is a Lebesgue null set. For $x \in [0,1]$, define $F(x) = \int_0^x f(z) \diff z$. Then $F$ is not convex because otherwise, $F$ would be differentiable almost everywhere and its derivative would be increasing and equal to $f$ for all $x$ not contained in some set of Lebesgue measure zero. This implies that there are points $0 \leq x_1 < x_2 < x_3 \leq 1$ such that
\begin{equation} \label{eq:not_convex}
	\frac{\int_{[x_1, x_2]} f(z) \diff z}{x_2 - x_1} = \frac{F(x_2) - F(x_1)}{x_2 - x_1} > \frac{F(x_3) - F(x_2)}{x_3 - x_2} = \frac{\int_{[x_2, x_3]} f(z) \diff z}{x_3 - x_2}.
\end{equation}
Let $c := (x_3 - x_1) / (x_2 - x_1) = 1 + (x_3-x_2)/(x_2-x_1) > 1$ and define
\[
	G(x) = \begin{cases}
		x, & \ x < x_1, \\
	    x_1 + c(x - x_1), & \ x \in [x_1, x_2), \\
		x_3, & \ x \in [x_2, x_3), \\
		x, & \ x \in [x_3, 1].
	\end{cases}
\]
Then $G(x) \geq x$ for $x \in [0,1]$, and by \eqref{eq:not_convex},
\begin{align*}
	\mathbb{E}_{G}(f(Z)) & = \int_{[0, x_1)} f(z) \diff z + c \int_{[x_1, x_2)} f(z) \diff z + \int_{[x_3, 1]} f(z) \diff z \\
	& = \int_{[0, x_1)} f(z) \diff z + \int_{[x_1, x_2)} f(z) \diff z + \frac{x_3-x_2}{x_2-x_1}\int_{[x_1, x_2)} f(z) \diff z + \int_{[x_3, 1]} f(z) \diff z \\
	& > \int_{[0,1]} f(z) \diff z = 1.
\end{align*}
\end{proof}

\begin{proof}[Proof of Proposition \ref{prop:u_statistics}]

Let $h> 1$, and recall the definitions
\[
	\mathfrak{F} = (\mathcal{F}_t)_{t \in \mathbb{N}}, \quad
		\mathfrak{F}^{[k]} = \left(\mathcal{F}_{\lfloor \frac{t-l}{h}\rfloor h + l}\right)_{t\in \mathbb{N}}, \quad M^{[k]} =  \left(\prod_{l \in I_k(t)} E_l\right)_{t \in \mathbb{N}}, \ k = 1, \dots, h,
\]
with $I_k(t) = \{k + hs : s = 0, \dots, \lfloor (t - k)/h\rfloor\}$. Since $M^{[k]}$ is an $\mathfrak{F}^{[k]}$-supermartingale, it satisfies
\[
	\mathbb{E}_{\mathbb{P}}\left(M^{[k]}_{\tau[k]}\right) \le 1, \ P \in \mathcal{H},
\]
for any $\mathfrak{F}^{[k]}$-stopping time $\tau^{[k]}$. Therefore, for $\mathfrak{F}^{[k]}$-stopping times $\tau^{[k]}$, $k = 1,\dots,h$, 
\[
\mathbb{E}_\mathbb{P}\left(\frac{1}{h}\sum_{k=1}^h M^{[k]}_{\tau^{[k]}}\right) \le 1.
\]
If $\tau$ is an $\mathfrak{F}$-stopping time, then 
\[
\left(\left\lfloor\frac{\tau-k-1}{h}\right\rfloor + 1\right)h + k =: f_k(\tau)
\]
is an $\mathfrak{F}^{[k]}$-stopping time. This implies that for any $\mathfrak{F}$-stopping time $\tau$, we obtain that
\[
\mathbb{E}[M_{\tau + h - 1}] = \mathbb{E}_\mathbb{P}\left(\frac{1}{h}\sum_{k=1}^h M^{[k]}_{f_k(\tau)}\right) \le 1,
\]
because $M_{\tau + h - 1} = \sum_{k=1}^h M^{[k]}_{f_k(\tau)}/h$ for all $t \in \mathbb{N}$.
\end{proof}

\begin{proof}[Proof of the validity of tests based on $\tau_{\alpha,h}$]
For $k = 1, \dots, h$, define
\[
	p^{[k]}_t = \min(1, \, \inf_{s \leq t} 1/M^{[k]}_s), \quad p^{[k]} = \lim_{t \rightarrow \infty} p^{[k]}_t = \min(1, \, 1/\sup_{t \in \mathbb{N}} M^{[k]}_t).
\]
The fact that $M^{[k]}$ is a nonnegative $\mathfrak{F}^{[k]}$-supermartingale implies
\[
	\mathbb{P}(p^{[k]} \leq \alpha) \leq \alpha, \ \alpha \in (0,1], \ P \in \mathcal{H}.
\] 
Consequently, $p^{[k]}$, $k = 1, \dots, h$, are p-variables for $\mathcal{H}$, and can be combined with any of the methods for combining p-variables described in \citet{Vovk_Wang_p_2020}. In particular, the harmonic average of $p^{[1]},\dots, p^{[h]}$ multiplied by $e\log(h)$ is a p-variable, which implies
\[
	P\left(\frac{e\log(h)}{\sum_{k=1}^h1/(hp^{[k]})} \leq \alpha \right) \leq \alpha,
\]
and hence, because $1/p^{[k]} \geq \sup_{t \in \mathbb{N}} M^{[k]}_t$,
\[
	P\left(\frac{1}{he\log(h)}\sum_{k=1}^h \sup_{t \in \mathbb{N}} M^{[k]}_t \geq 1/\alpha \right) \leq \alpha, \ \alpha \in (0,1], \ P \in \mathcal{H}.
\]
\end{proof}

\section{Implementation details} \label{app:implementation}

\subsection{Beta e-values}
The parameters $(\alpha, \beta)$ in the beta e-values are estimated by maximum likelihood with Newton's method for maximization. The moment matching estimator is taken as a starting point, and the Newton iterations are continued until the likelihood between subsequent iterations does not differ by more than $10^{-6}$ or until a maximum number of $20$ iterations is reached. For stability, the values of $(\alpha, \beta)$ are truncated to lie in $[0.001, 100]$, and parameter estimation is only started after $10$ observations are available (the first $10$ e-values are set to $1$). The implementation of Newton's method for maximizing the likelihood uses code adapted from the {\tt Rfast} package \citep{Papadakis2020}.

\subsection{Kernel e-values} \label{app:kernel}
The kernel e-values use the boundary kernel densities as suggested by \citet{Muller1994}. In their original form, these kernel functions may attain negative values, so the non-negativity correction by \citet{Jones1996} is applied. This estimation method is implemented in the {\tt bde} package in \textsf{R} \citep[][function {\tt jonesCorrectionMuller94BoundaryKernel}]{Santafe2015}. The resulting density may sometimes not integrate to one. Therefore, it is evaluated on the discrete grid $0, 0.01, \dots, 0.99, 1$ and rescaled so that this discretized version has integral one. To estimate the bandwidth, the direct plug-in approaches as described in Section 3.6 of \citet{Wand1995} and implemented in the {\tt KernSmooth} package \citep{Wand2021} are applied, with $2$ levels of functional estimation for the plug-in rule. In this article, all results with the kernel e-value are based on the boundary corrected Epanechnikov kernel, and only the bandwidth is updated sequentially.

\subsection{Betabinomial e-values}
For the estimation of the parameters $(\alpha, \beta)$ in the betabinomial e-values, Newton's method is applied to maximize the likelihood. The moment matching estimators are taken as starting point, and iterations are continued until the sum of the absolute differences between the parameter estimates, $|\alpha_k - \alpha_{k - 1}| + |\beta_k - \beta_{k-1}|$, is smaller than $10^{-7}$ or until a maximum number of $20$ iterations is reached. The stopping criterion is different from the estimation of the beta-distribution, since the evaluation of the log-likelihood function for the betabinomial distribution is more costly. The values of $\alpha$ and $\beta$ are truncated to lie in $[0.001, 100]$. Parameter estimation starts with $20$ observations (the first $20$ e-values are set to 1), because the smaller number of $10$ observations, which is applied in the beta e-values, led to diverging parameter estimates in some simulation examples. The implementation of Newton's method for maximizing the likelihood uses code adapted from the {\tt Rfast} package \citep{Papadakis2020}.

\subsection{E-values based on empirical frequencies}
The e-values for the discrete uniform distribution based on the empirical frequencies start with a minimum number of $10$ observations, all previous e-values are set to $1$. For each element of the discrete set, one artificial observation is included at the beginning, so that the frequencies in the $t$-th step equal $(k_j^t + 1) / (m + t)$, $j = 1, \dots, m$, where $k_j^t = \#\{i = 1, \dots, t \mid r_i = j\}$.

\subsection{Grenander e-values}
The e-values based on the Grenander estimator start with a minimum number of $10$ observations. The Grenander estimator is recomputed with each new observation, applying the abridged pool-adjacent violaters algorithm by \citet{Henzi2020}. To avoid e-values of exactly zero, the correction $\tilde{E}_t = 1/t + (1 - 1/t)E_t$ is applied.

\subsection{Bernstein e-values}
The estimation of monotone densities with mixtures of Bernstein polynomials is based on adapted \textsf{R} code by \citet{Turnbull2014}. The mixture weights are computed by minimizing the error defined in Equation (5) in \citet{Turnbull2014}, subject to constraints on the weights to ensure monotonicity. This leads to a quadratic programming problem, which is solved with {\tt osqp} from the identically named \textsf{R} package \citep{Stellato2019}. The {\tt osqp} algorithm is faster and more stable than the quadratic programming solver applied in the original version of the code. The relative and absolute convergence tolerance parameters are set to $10^{-5}$ and the maximum number of iterations to $4000$. A minimum number of $10$ observations is required to compute the e-values, and the first $10$ e-values are set to $1$. The maximal degree of the Bernstein polynomials is fixed at $20$ and not estimated. To avoid zero e-values, the correction $\tilde{E}_t = 1/t + (1 - 1/t)E_t$ is applied.

\beginsupplement
\section{Supplementary material}
\subsection{Simulation example}
The rejection rates in all figures for the simulation study are computed over 5000 simulations.

\medskip\noindent
Figures \ref{fig:simulations_power_n_180_alpha_0.05} and \ref{fig:simulations_power_n_720_alpha_0.05} are like Figure 2 in the article but with sample sizes $n = 180$ and $n = 720$. Figures \ref{fig:simulations_power_n_360_alpha_0.01} and \ref{fig:simulations_power_n_360_alpha_0.1} have sample size $n = 360$ but $\alpha = 0.01$ and $\alpha = 0.1$ instead of $\alpha = 0.05$.

\medskip\noindent
Figure \ref{fig:simulations_power_rhist} shows the rejection rates of the tests for the discrete uniform distribution for ensembles of size $m = 10, \, 20, \, 50$, with $n = 360$ and $\alpha = 0.05$.

\medskip\noindent
Figure \ref{fig:simulations_power_qpit} show the rejection rates of the tests for calibration of quantile forecasts for $K = 9, \, 19$ equispaced quantiles, with $n = 360$ and $\alpha = 0.05$.

\subsection{Case study}
The e-values for testing the discrete uniform distributions have been applied to the rank histograms of the raw ECMWF ensemble forecasts. Both methods (e-values based on the empirical distribution and on the betabinomial distribution) clearly reject uniformity with only a few observations, since the raw ensemble forecasts have strong biases and dispersion errors when evaluated against station observations. Like for the PIT, the first $n_0 = 366$ e-values are set to 1 and the corresponding ranks are used to estimate the rank histogram with the empirical distribution or the betabinomial distribution. Table \ref{tab:rhist} shows how many observations after $n_0 = 366$ are required until the sequential e-values first cross the level $10^{8}$.

\subsection{Illustration for the proof of Proposition \ref{prop:stochastic_test}}
Figure \ref{fig:proof_illustration} illustrates the special case in the proof of Proposition 3.2 where $f$ is almost surely equal to a Lebesgue density $\tilde{f}$ and $f(a) > \tilde{f}(a)$ for some $a \in [0,1)$ but $f(x) \leq \tilde{f}(x)$ for all $x > a$. In this case, the function $\check{f}$ defined by $\check{f}(a) = f(a)$ for all such $a$ and $\check{f}(x) = \tilde{f}(x)$ for all other $x$ is increasing, almost surely equal to $f$, and $\check{f}(x) \geq f(x)$ for all $x \in [0,1]$.

\begin{figure}
	\centering
	\includegraphics[width = 0.9\textwidth]{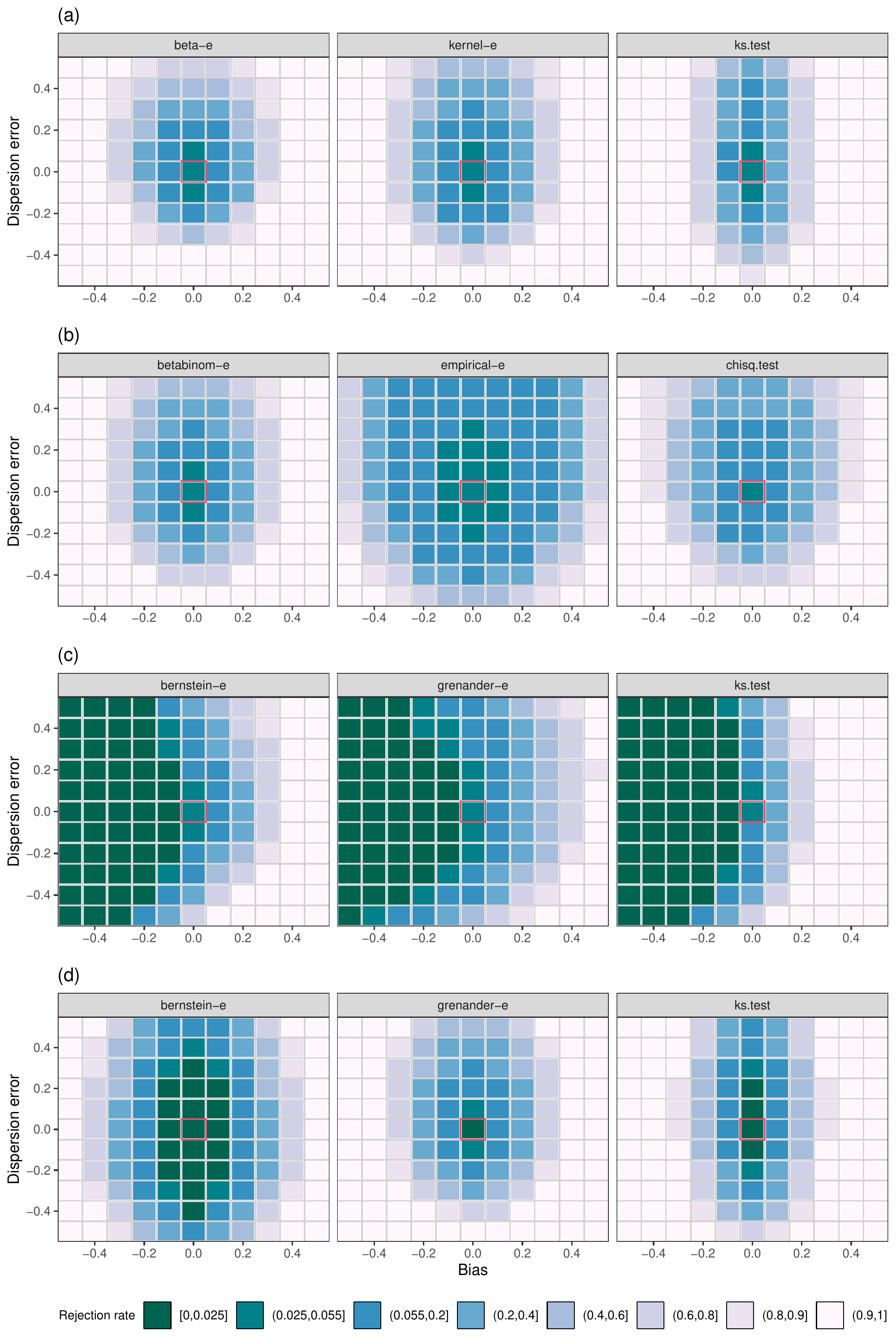}
	\caption{Rejection rates of different tests for (a) the continuous uniform distribution (b) the discrete uniform distribution (c) stochastic dominance (d) calibration of quantile forecasts, at the level $\alpha = 0.05$ with a sample size of $n = 180$, depending on the bias and dispersion error. The red box highlights the rejection rates for bias and dispersion error equal to zero. \label{fig:simulations_power_n_180_alpha_0.05}}
\end{figure}

\begin{figure}
	\centering
	\includegraphics[width = 0.9\textwidth]{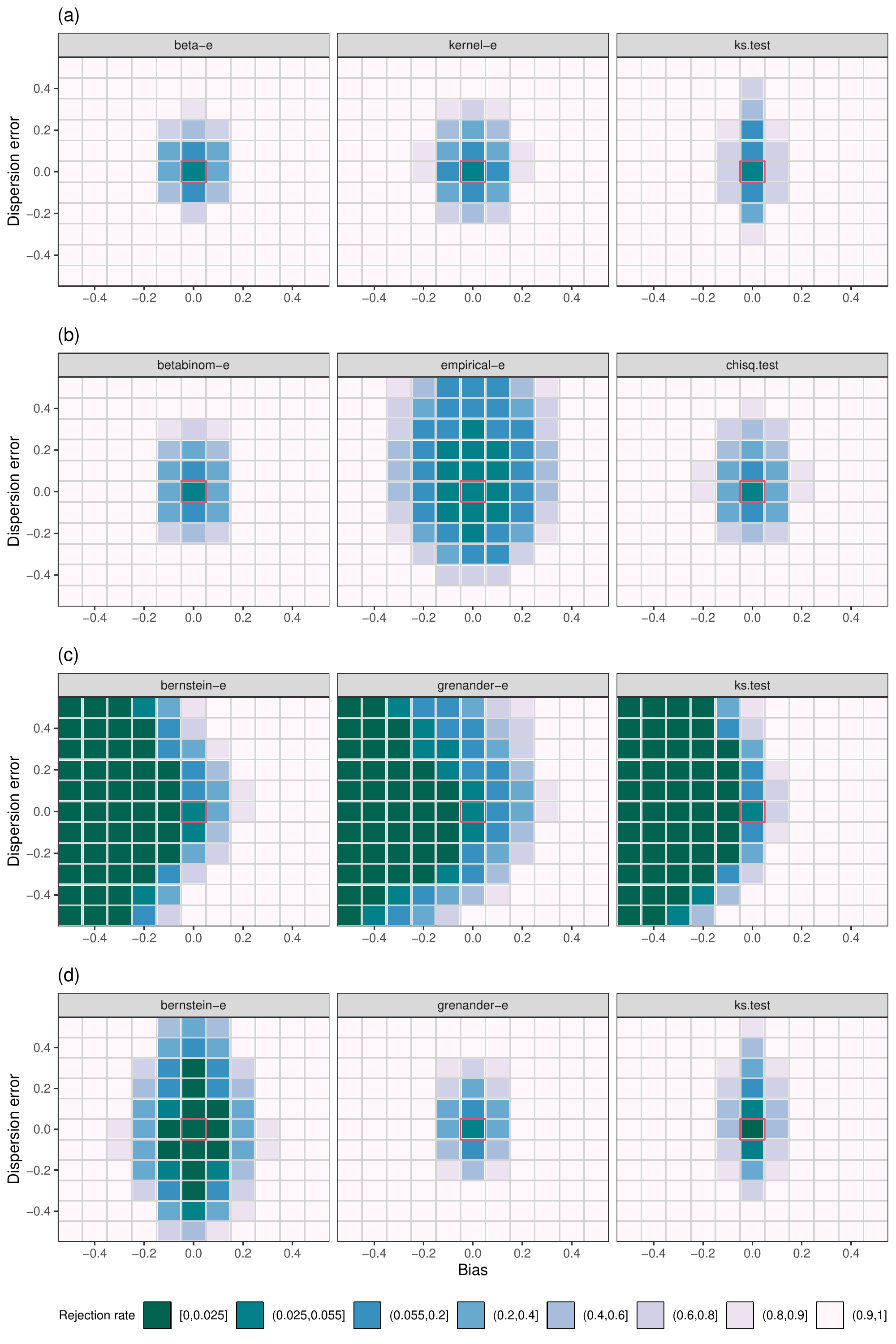}
	\caption{Rejection rates of different tests for (a) the continuous uniform distribution (b) the discrete uniform distribution (c) stochastic dominance (d) calibration of quantile forecasts, at the level $\alpha = 0.05$ with a sample size of $n = 720$, depending on the bias and dispersion error. The red box highlights the rejection rates for bias and dispersion error equal to zero. \label{fig:simulations_power_n_720_alpha_0.05}}
\end{figure}

\begin{figure}
	\centering
	\includegraphics[width = 0.9\textwidth]{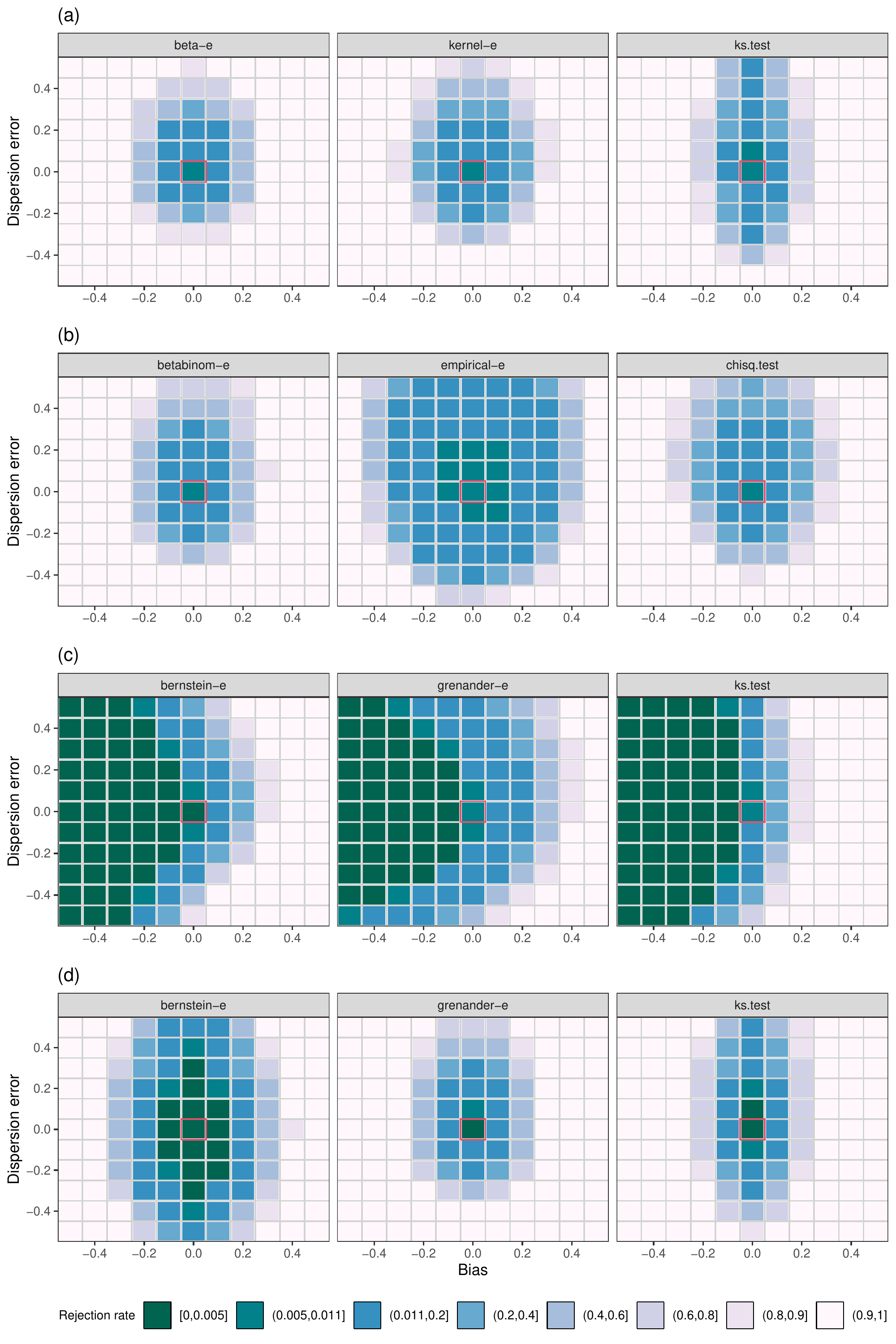}
	\caption{Rejection rates of different tests for (a) the continuous uniform distribution (b) the discrete uniform distribution (c) stochastic dominance (d) calibration of quantile forecasts, at the level $\alpha = 0.01$ with a sample size of $n = 360$, depending on the bias and dispersion error. The red box highlights the rejection rates for bias and dispersion error equal to zero. \label{fig:simulations_power_n_360_alpha_0.01}}
\end{figure}

\begin{figure}
	\centering
	\includegraphics[width = 0.9\textwidth]{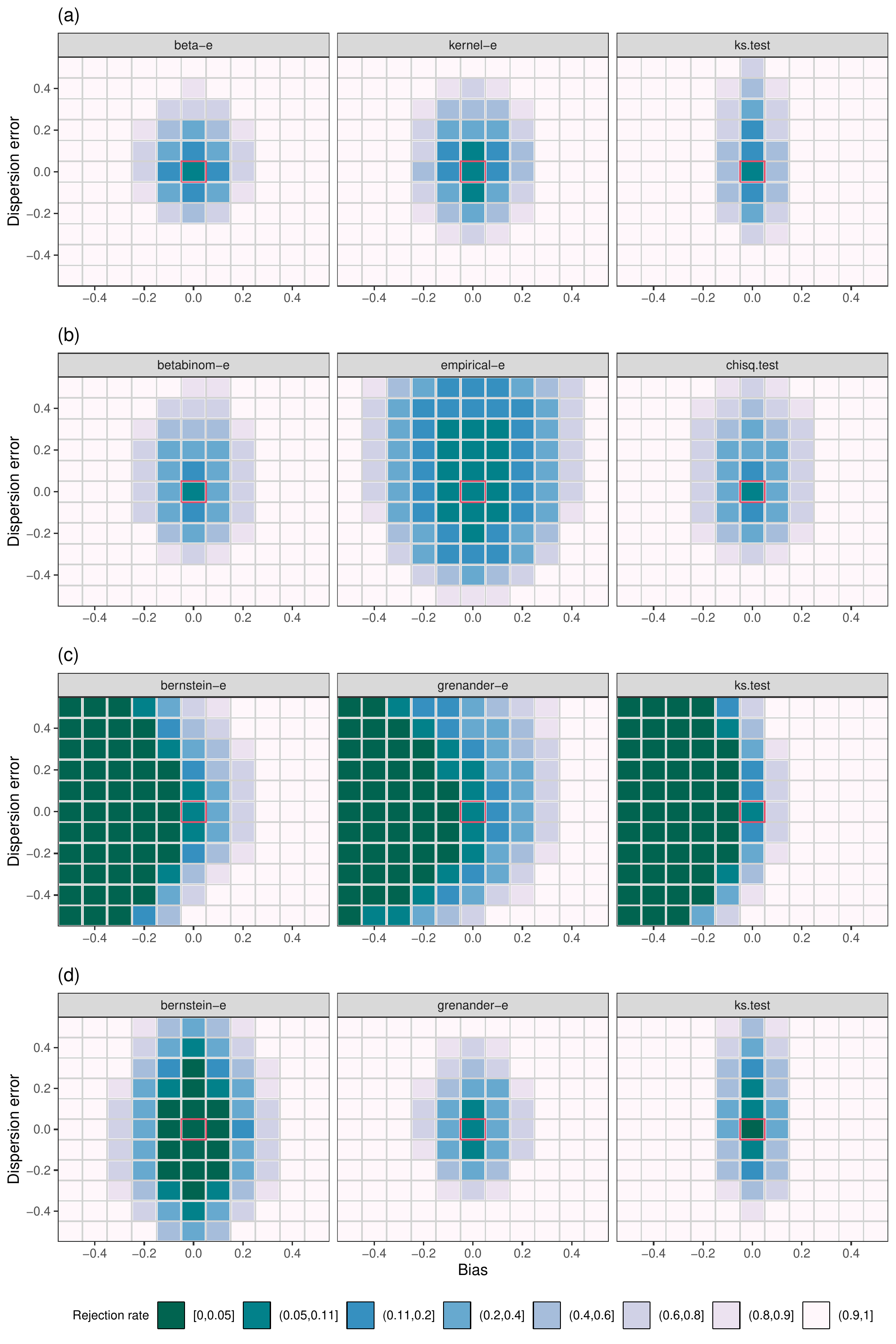}
	\caption{Rejection rates of different tests for (a) the continuous uniform distribution (b) the discrete uniform distribution (c) stochastic dominance (d) calibration of quantile forecasts, at the level $\alpha = 0.1$ with a sample size of $n = 360$, depending on the bias and dispersion error. The red box highlights the rejection rates for bias and dispersion error equal to zero. \label{fig:simulations_power_n_360_alpha_0.1}}
\end{figure}

\begin{figure}
	\centering
	\includegraphics[width = 0.9\textwidth]{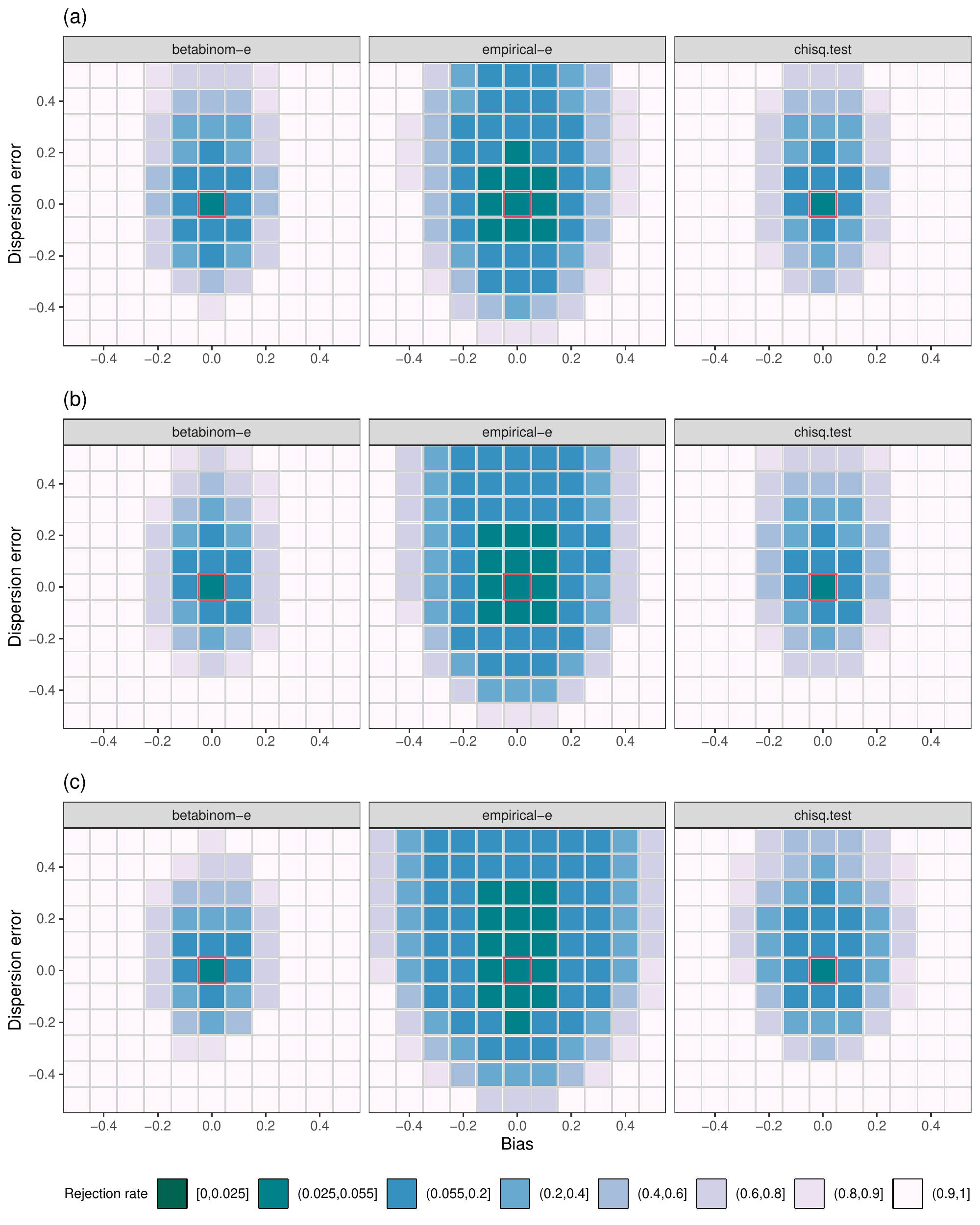}
	\caption{Rejection rates of different tests for uniformity of the rank histogram with ensemble sizes of (a) $m = 10$ (b) $m = 20$ (c) $m = 50$, with $n = 360$ and at the level $\alpha = 0.05$. \label{fig:simulations_power_rhist}}
\end{figure}

\begin{figure}
	\centering
	\includegraphics[width = 0.9\textwidth]{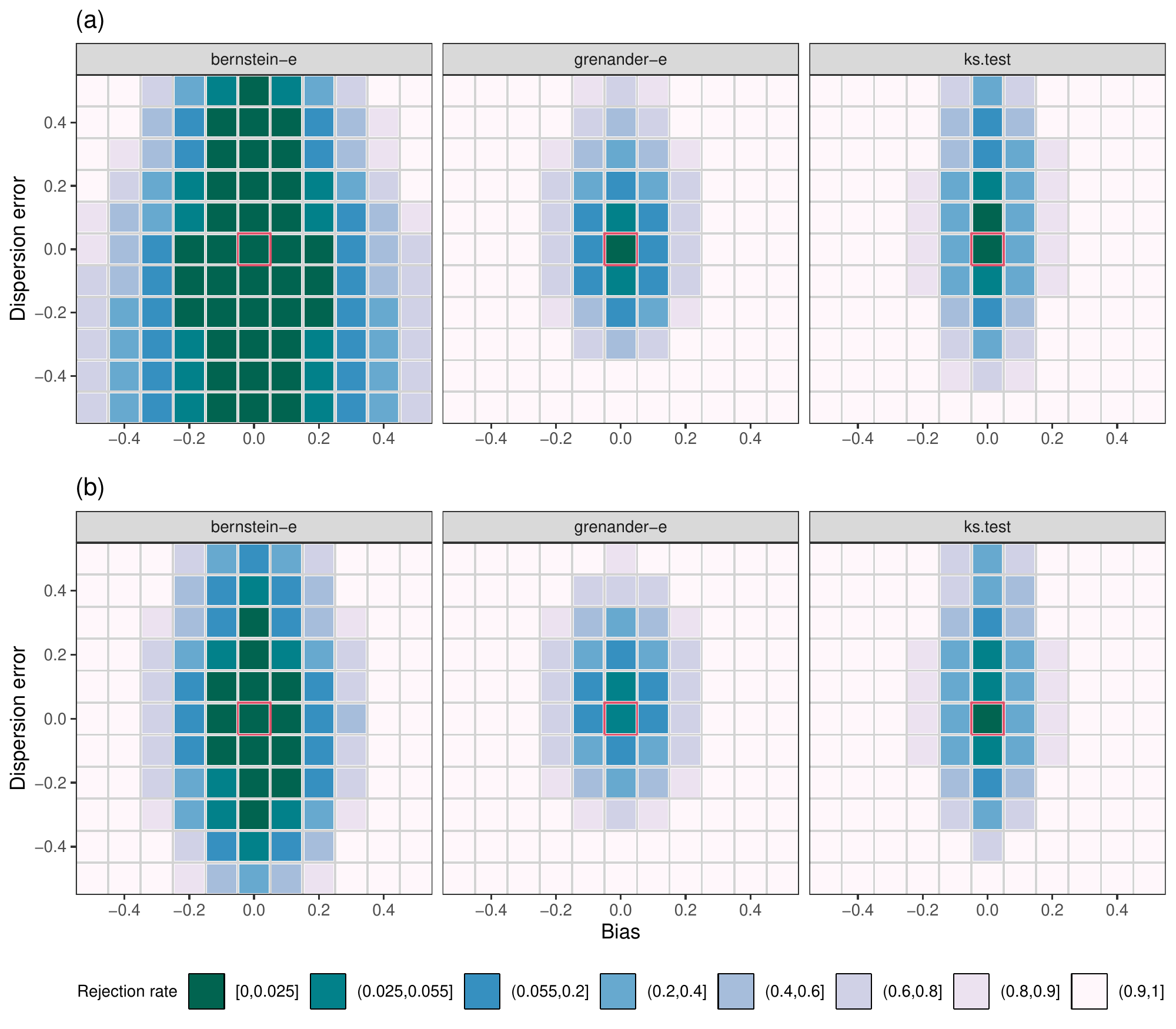}
	\caption{Rejection rates of different tests for calibration of quantile forecasts, with (a) $K = 9$ and (b) $K = 19$ equispaced quantiles, $n = 360$, and $\alpha = 0.05$. \label{fig:simulations_power_qpit}}
\end{figure}

\begin{table}
	\caption{Number of observations (after the first $n_0 = 366$ days where e-values are not computed) until the e-value with the given method (empirical distribution and betabinomial distribution) exceeds the level $10^8$ for the first time, for each weather station, variable, and leadtime. Note that due to the fact that both methods have very similar power, the values often coincide for the two different methods. \label{tab:rhist}}
	\bigskip
	\resizebox{\textwidth}{!}{%
		\footnotesize
		\begin{tabular}{llrrrrrr|llrrrrrr}
			\toprule
			Station ID & Variable & \multicolumn{3}{c}{Empirical distr.} & \multicolumn{3}{c|}{Betabinomial} &
			Station ID & Variable & \multicolumn{3}{c}{Empirical distr.} & \multicolumn{3}{c}{Betabinomial} \\
			\midrule
			& \multicolumn{1}{r}{Leadtime} & $24$ & $48$ & $72$ & $24$ & $48$ & $72$ &
			& \multicolumn{1}{r}{Leadtime} & $24$ & $48$ & $72$ & $24$ & $48$ & $72$\\
			\midrule
			10015 & Precipitation & 19 & 39 & 86 & 19 & 39 & 86 & 10488 & Precipitation & 17 & 73 & 340 & 17 & 73 & 340\\
			& Temperature & 37 & 99 & 141 & 37 & 99 & 141 &  & Temperature & 43 & 137 & 688 & 27 & 137 & 688\\
			& Wind speed & 24 & 416 & 1428 & 27 & 416 & 1428 &  & Wind speed & 34 & 153 & 309 & 34 & 153 & 309\\
			10147 & Precipitation & 21 & 100 & 328 & 21 & 100 & 328 & 10513 & Precipitation & 19 & 71 & 152 & 17 & 71 & 152\\
			& Temperature & 22 & 205 & 550 & 22 & 205 & 550 &  & Temperature & 13 & 109 & 382 & 13 & 109 & 382\\
			
			& Wind speed & 71 & 307 & 1096 & 69 & 307 & 1096 &  & Wind speed & 17 & 57 & 96 & 17 & 57 & 96\\
			10162 & Precipitation & 19 & 128 & 141 & 19 & 128 & 141 & 10554 & Precipitation & 21 & 37 & 140 & 21 & 37 & 140\\
			& Temperature & 29 & 157 & 442 & 31 & 157 & 442 &  & Temperature & 21 & 45 & 151 & 21 & 45 & 151\\
			& Wind speed & 21 & 100 & 271 & 21 & 100 & 271 &  & Wind speed & 18 & 137 & 597 & 18 & 137 & 597\\
			10224 & Precipitation & 31 & 123 & 233 & 31 & 123 & 233 & 10609 & Precipitation & 19 & 49 & 317 & 19 & 49 & 317\\
			
			& Temperature & 36 & 165 & 538 & 36 & 165 & 538 &  & Temperature & 30 & 109 & 243 & 30 & 109 & 243\\
			& Wind speed & 47 & 442 & 1258 & 47 & 442 & 1258 &  & Wind speed & 31 & 96 & 150 & 29 & 96 & 150\\
			10338 & Precipitation & 23 & 66 & 144 & 23 & 66 & 144 & 10637 & Precipitation & 18 & 92 & 172 & 19 & 92 & 172\\
			& Temperature & 18 & 137 & 460 & 18 & 137 & 460 &  & Temperature & 17 & 76 & 388 & 17 & 76 & 388\\
			& Wind speed & 27 & 98 & 621 & 26 & 98 & 621 &  & Wind speed & 69 & 196 & 485 & 60 & 196 & 485\\
			
			10361 & Precipitation & 14 & 55 & 244 & 13 & 55 & 244 & 10655 & Precipitation & 11 & 37 & 320 & 11 & 37 & 320\\
			& Temperature & 15 & 169 & 679 & 15 & 169 & 679 &  & Temperature & 15 & 46 & 420 & 14 & 46 & 420\\
			& Wind speed & 27 & 76 & 99 & 26 & 76 & 99 &  & Wind speed & 44 & 220 & 629 & 48 & 220 & 629\\
			10379 & Precipitation & 14 & 105 & 333 & 14 & 105 & 333 & 10729 & Precipitation & 25 & 43 & 316 & 21 & 43 & 316\\
			& Temperature & 20 & 111 & 418 & 19 & 111 & 418 &  & Temperature & 17 & 73 & 174 & 17 & 73 & 174\\
			
			& Wind speed & 24 & 286 & 687 & 22 & 286 & 687 &  & Wind speed & 21 & 140 & 409 & 19 & 140 & 409\\
			10382 & Precipitation & 10 & 42 & 111 & 10 & 42 & 111 & 10738 & Precipitation & 26 & 77 & 129 & 24 & 77 & 129\\
			& Temperature & 16 & 165 & 693 & 16 & 165 & 693 &  & Temperature & 34 & 75 & 675 & 33 & 75 & 675\\
			& Wind speed & 22 & 219 & 366 & 21 & 219 & 366 &  & Wind speed & 26 & 139 & 143 & 18 & 139 & 143\\
			10400 & Precipitation & 19 & 72 & 76 & 18 & 72 & 76 & 10763 & Precipitation & 12 & 59 & 105 & 12 & 59 & 105\\
			
			& Temperature & 14 & 160 & 498 & 16 & 160 & 498 &  & Temperature & 22 & 86 & 398 & 22 & 86 & 398\\
			& Wind speed & 25 & 140 & 690 & 25 & 140 & 690 &  & Wind speed & 37 & 104 & 264 & 27 & 104 & 264\\
			10444 & Precipitation & 33 & 79 & 185 & 33 & 79 & 185 & 10776 & Precipitation & 21 & 93 & 472 & 21 & 93 & 472\\
			& Temperature & 38 & 302 & 476 & 38 & 302 & 476 &  & Temperature & 21 & 48 & 393 & 20 & 48 & 393\\
			& Wind speed & 42 & 149 & 429 & 42 & 149 & 429 &  & Wind speed & 43 & 153 & 287 & 44 & 153 & 287\\
			
			10469 & Precipitation & 20 & 71 & 340 & 19 & 71 & 340 & 10852 & Precipitation & 21 & 33 & 105 & 21 & 33 & 105\\
			& Temperature & 18 & 44 & 73 & 17 & 44 & 73 &  & Temperature & 19 & 45 & 67 & 19 & 45 & 67\\
			& Wind speed & 36 & 190 & 707 & 32 & 190 & 707 &  & Wind speed & 47 & 392 & 1531 & 45 & 392 & 1531\\
			\bottomrule
		\end{tabular}
	}
\end{table}

\begin{figure}
	\centering
	\includegraphics[width = 0.9\textwidth]{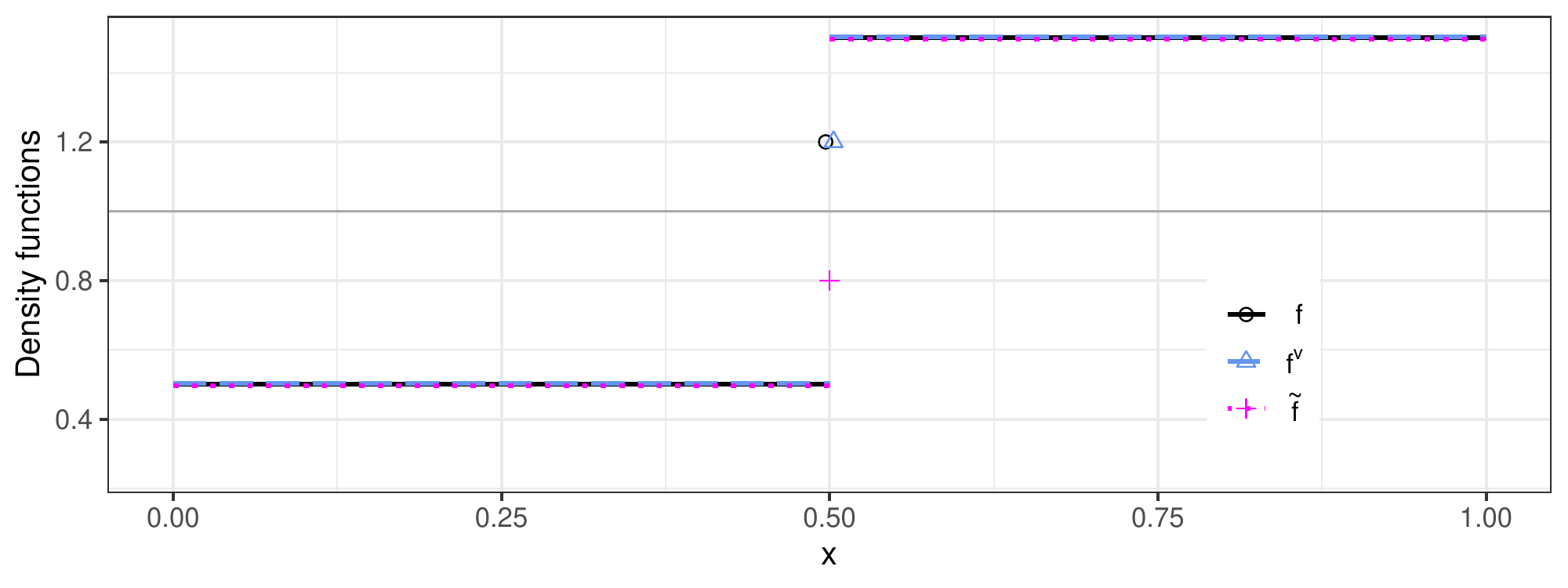}
	\caption{Illustration of the functions $f$, $\tilde{f}$, and $\check{f}$ in the special case in the proof of Proposition 3.2. Here $f(x) = \tilde{f}(x) = \check{f}(x) = 0.5$ for $x < 0.5$ and $f(x) = \tilde{f}(x) = \check{f}(x) = 1.5$ for $x > 0.5$. At $a = 0.5$ we have $\tilde{f}(a) = 0.8 < 1.2 = f(a)$, and we set $\check{f}(a) = f(a) = 1.2$. \label{fig:proof_illustration}}
\end{figure}
\end{document}